\newcommand{\abs}[1]{\lvert #1 \rvert}
\newtheorem{theorem}{Theorem}
\newtheorem{lemma}[theorem]{Lemma}
\newtheorem{definition}[theorem]{Definition}
\title{Differential Privacy on Dynamic Data}
\author{
  Yuan Qiu \\
  HKUST \\
  \texttt{yqiuac@cse.ust.hk} \\
  %% examples of more authors
   \And
  Ke Yi \\
  HKUST \\
  \texttt{yike@cse.ust.hk} \\
}
\begin{document}
\maketitle

\begin{abstract}
A fundamental problem in differential privacy is to release a privatized data structure over a dataset that can be used to answer a class of linear queries with small errors.  This problem has been well studied in the static case.  In this paper, we consider the dynamic setting where items may be inserted into or deleted from the dataset over time, and we need to continually release data structures so that queries can be answered at any time.  
We present black-box constructions of such dynamic differentially private mechanisms from static ones with only a  polylogarithmic degradation in the utility.   For the fully-dynamic case, this is the first such result.  For the insertion-only case, similar constructions are known, but we improve them over sparse update streams.
\end{abstract}

% keywords can be removed
\keywords{Differential privacy\and Dynamic data\and Linear query}

\section{Introduction}

\subsection{Differential Privacy for Static Data}
Let $\mathcal{X}$ be a domain of items.  A dataset is a multiset of items $D\in \mathbb{N}^\mathcal{X}$.  Two datasets $D,D' \in \mathbb{N}^\mathcal{X}$ are \textit{neighbors}, denoted $D\sim D'$, if there exists an item $x\in \mathcal{X}$, such that $D=D'\cup\{x\}$ or vise versa\footnote{In this paper, all set operators on multisets denote their multiset versions.}.  \textit{Differential privacy (DP)}~\cite{DBLP:journals/fttcs/DworkR14} is defined as follows.

\begin{definition}[Differential Privacy~\cite{DBLP:journals/fttcs/DworkR14}]
A randomized mechanism $\mathcal{M}:\mathbb{N}^\mathcal{X}\to \mathcal{Y}$ satisfies $(\varepsilon,\delta)$-DP if for any neighboring datasets $D\sim D'$ and any subset of outputs $Y\subseteq \mathcal{Y}$, 
\begin{equation}\label{def:dp}
\Pr[\mathcal{M}(D) \in Y] \leq e^\varepsilon\cdot \Pr[\mathcal{M}(D') \in Y] + \delta\,.
\end{equation}
\end{definition}
Note that $\varepsilon$ is usually set to some constant, while $\delta$ should be negligible.  In particular, the $\delta=0$ case is referred to as \textit{pure-DP} or $\varepsilon$-DP, which provides a qualitative better privacy guarantee than the $\delta>0$ case. 

A \textit{linear query} is specified by a function $f:\mathcal{X}\to [0,1]$.  The result of evaluating $f$ on $D$ is defined as $f(D):=\sum_{x\in D} f(x)$.
A fundamental problem in differential privacy is the following: Given a set of linear queries $\mathcal{Q}=\{f_1,\dots,f_{\abs{\mathcal{Q}}}\}$, design a DP mechanism $\mathcal{M}$ that, on any given $D$, outputs a data structure $\mathcal{M}(D)$, from which an approximate $f(D)$ can be extracted for any $f\in \mathcal{Q}$.  Let $\mathcal{M}_f(D)$ be the extracted answer for $f(D)$. We say that $\mathcal{M}$ has error $\alpha$ with probability $1-\beta$, if 
\[
\Pr\left[\max_{f\in\mathcal{Q}}\left|\mathcal{M}_f(D)-f(D)\right|> \alpha\right] \leq \beta
\]
for any $D$, 
where the probability is taken over the internal randomness of $\mathcal{M}$.  
Clearly, the error $\alpha$ is a function of both the privacy budget $(\varepsilon, \delta)$ and the failure probability $\beta$.  For most mechanisms, it also depends on the data size $\abs{D}$, number of queries $\abs{\mathcal{Q}}$, and domain size $\abs{\mathcal{X}}$.  To simplify notation, we often omit some of these parameters from  the full list $\alpha(\varepsilon,\delta,\beta,\abs{D},\abs{\mathcal{Q}},\abs{\mathcal{X}})$ if it is clear from the context.

There is extensive work on the best achievable $\alpha$ for various families of linear queries.  This paper takes a black-box approach, i.e., we present dynamic algorithms that can work with any $\mathcal{M}$ that has been designed for a static dataset $D$.  The error for the dynamic algorithm will be stated in terms of the $\alpha$ function of the mechanism $\mathcal{M}$ that is plugged into the black box.  Nevertheless, we often derive the explicit bounds for the following two most interesting and extreme cases:

\medskip
\noindent\textbf{Basic counting.} If $\mathcal{Q}$ consists of a single query $f(\cdot)\equiv 1$, which simply returns $f(D)=\abs{D}$, then the ``data structure'' $\mathcal{M}(D)$ consists of just one number, which is a noise-masked $f(D)$.  The most popular choice of the noise is a random variable drawn from the Laplace distribution $\operatorname{Lap}(\frac{1}{\varepsilon})$, and the resulting mechanism satisfies $\varepsilon$-DP.
Its error is $\alpha_{\operatorname{Lap}}(\varepsilon,\beta)=O(\frac{1}{\varepsilon}\log{\frac{1}{\beta}})$.
Alternatively, one can add a Gaussian noise, which is $(\varepsilon, \delta)$-DP for $\delta>0$ and yields $\alpha_{\operatorname{Gauss}}(\varepsilon,\delta,\beta)=O\left(\frac{1}{\varepsilon}\sqrt{\log{\frac{1}{\delta}}\log{\frac{1}{\beta}}}\right)$.  The two error bounds are generally incomparable, but the former is usually better since $\delta\le \beta$ in common parameter regimes.

\medskip
\noindent\textbf{Arbitrary queries.} If $\mathcal{Q}$ consists of arbitrary linear queries, then the \textit{private multiplicative weights (PMW)}~\cite{DBLP:conf/focs/HardtR10,DBLP:conf/nips/HardtLM12} mechanism achieves 
\[
\alpha_{\operatorname{PMW}}(\varepsilon,\delta,\beta,\abs{D},\abs{\mathcal{Q}},\abs{\mathcal{X}})=\left\{
\begin{aligned}
&O\left(\abs{D}^\frac{2}{3}\left(\frac{\log\abs{\mathcal{X}} \log (\abs{\mathcal{Q}}/\beta)}{\varepsilon}\right)^\frac{1}{3}\right), & \delta = 0; \\
&O\left(\abs{D}^\frac{1}{2}\left(\frac{\sqrt{\log \abs{\mathcal{X}}\log(1/\delta)} \log (\abs{\mathcal{Q}}/\beta)}{\varepsilon}\right)^\frac{1}{2}\right), & \delta > 0.
\end{aligned}
\right.
\]
It is known that for $\abs{\mathcal{Q}}$ sufficiently large, PMW achieves the optimal error up to polylogarithmic factors.

There are many possibilities between the two extreme cases.  In each case, the achievable error bound $\alpha$ intricately depends on the discrepancy of the query set $\mathcal{Q}$, in addition to the aforementioned parameters.  We include a brief review in Appendix \ref{app:review}, which is not necessary for the understanding of this paper. We make a reasonable assumption that $\alpha$ does not depend on any of those parameters exponentially, which allows us to ignore the constant coefficients in the parameters when writing big-O results.  This assumption holds for most existing mechanisms for linear queries, except for the so-called low-privacy regime $\varepsilon > \omega(1)$.  For results related to PMW, we often use the $\tilde{O}$ notation to further suppress dependencies on $\varepsilon$ and the polylogarithmic factors.
% Similarly, this assumption allows us to ignore polylogarithmic factors in these parameters when using the $\tilde{O}$ notation. 

\subsection{Differential Privacy for Insertion-Only Streams}
Moving from static data to dynamic data, the simplest model is the insertion-only case,  which has been studied under the name \textit{differential privacy under continual observation} \cite {DBLP:conf/stoc/DworkNPR10,DBLP:conf/icalp/ChanSS10}.  In this model, time is divided into discrete units and the input is an insertion-only stream $\bm{x}=(x_1,\dots,x_T)$, where each $x_i \in \mathcal{X}\cup\{\bot\}$ and $T$ is possibly $\infty$. 
If $x_i=\bot$, no item arrives at time $i$. If $x_i\in\mathcal{X}$, item $x_i$ is inserted into the underlying dataset at time $i$.  The dataset at time $t\in \mathbb{N}$ is thus $D_t:=\cup_{i\leq t:x_i\neq \bot} \{x_i\}$. Given a set of linear queries $\mathcal{Q}$, the problem is to release an $\mathcal{M}^{(t)}(D_t)$ immediately after every time step $t$ (some $\mathcal{M}^{(t)}(D_t)$'s may be empty), so that (1) all the released data structures $(\mathcal{M}^{(1)}(D_1), \mathcal{M}^{(2)}(D_2), \dots)$ jointly satisfy $(\varepsilon, \delta)$-DP, and (2) an approximate $f(D_t)$ can be extracted from $(\mathcal{M}^{(1)}(D_1), \dots, \mathcal{M}^{(t)}(D_t))$ for any $f\in \mathcal{Q}$ and any $t$. 

Two insertion-only streams $\bm{x},\bm{x}'\in (\mathcal{X}\cup\{\bot\})^*$ are considered as neighbors, denoted $\bm{x}\sim \bm{x}'$, if they differ by one timestamp \cite{DBLP:conf/stoc/DworkNPR10, DBLP:conf/icalp/ChanSS10, DBLP:conf/asiacrypt/DworkNRR15}.  Without loss of generality, we may assume that for this only different timestamp $i$, one of $x_i$ or $x'_i$ is $\bot$.  This is commonly called the \textit{add-one/remove-one} policy. To contrast, the \textit{change-one} policy requires $x_i\ne x'_i$ but neither is $\bot$. The former is more general, since a neighboring pair under the change-one policy is a neighboring pair of distance $2$ under the add-one/remove-one policy, thus an $(\varepsilon, \delta)$-DP mechanism by our definition is $(2\varepsilon, 2e^{2\varepsilon}\cdot \delta)$-DP under change-one policy by \textit{group privacy} \cite{DBLP:books/sp/17/Vadhan17}.

We introduce some extra notation here.
For any time range $[a,b]\subseteq\mathbb{N}$, define $D([a,b]):=\cup_{a\leq i \leq b:x_i\neq \bot}\{x_i\}$.  Hence $D_t=D([1,t])$. Accordingly, $f(D([a,b]))$ will be abbreviated to $f([a,b])$.
We use $N_t$ to denote the total number of items inserted up until time $t$, i.e., $N_t = \sum_{i=1}^t \mathbf{1}[x_i\neq \bot]$ where $\mathbf{1}[\cdot]$ is the indicator function.
Let $n_t=\abs{D_t}$ be the size of the dataset at time $t$.
For insertion-only streams, we have $N_t=n_t$.

One important property of linear queries is that they are \textit{union-preserving}, i.e.,  $f(D^{(1)}\cup D^{(2)}) = f(D^{(1)}) + f(D^{(2)})$ for any $D^{(1)},D^{(2)}\in \mathbb{N}^\mathcal{X}$. Thus, a common technique for insertion-only streams is to divide $D$ into disjoint subsets $D^{(1)},D^{(2)},\dots, D^{(k)}$, compute an $\mathcal{M}(D^{(j)})$ for each $D^{(j)}$, and return $\mathcal{M}_f(D^{(1)}) + \cdots + \mathcal{M}_f(D^{(k)})$ as an approximation of $f(D)$. The total error is thus at most $k\cdot \alpha(\varepsilon, \delta, \frac{\beta}{k})$ by a union bound.  This can often be improved by exploiting certain properties of $\mathcal{M}$.  For example, if we use the Laplace mechanism for the basic counting problem, then by Bernstein’s inequality~\cite{book:HDP} the error can be improved from $k\cdot \alpha_{\operatorname{Lap}}(\varepsilon, \frac{\beta}{k}) = O(\frac{k}{\varepsilon}\log\frac{k}{\beta})$ to 
$\alpha_{\operatorname{Lap}}^{(k)}(\varepsilon,\beta)=O(\frac{1}{\varepsilon}(\sqrt{k\log\frac{1}{\beta}} + \log\frac{1}{\beta}))$; similar improvements are also possible for many other mechanisms, which we review in Appendix \ref{app:review}.  However, for arbitrary queries with PMW, the simple union bound remains the best known.  Again, to hide all these details into the black box, we use $\alpha^{(k)}$ to denote the error bound under such a disjoint union.  More formally, we assume that the static mechanism $\mathcal{M}$ for queries $\mathcal{Q}$ is equipped with error functions $\alpha^{(k)}$ for all $k\in \mathbb{N}$, such that for any datasets $D^{(1)},\dots,D^{(k)}\in \mathbb{N}^\mathcal{X}$, we have 
\[
\Pr\left[\max_{f\in\mathcal{Q}} \left|\sum_{j=1}^k\mathcal{M}_f(D^{(j)}) - f(\cup_{j=1}^k D^{(j)})\right|> \alpha^{(k)}\right] \le \beta,
\]
where the probability is taken over the independent internal randomness of $\mathcal{M}(D^{(j)}), j=1,\dots, k$.  Likewise, $\alpha^{(k)}$ is a function of $\varepsilon,\delta,\beta,\abs{\mathcal{Q}}, \abs{\mathcal{X}}$, as well as the total size of the datasets $|D|=\sum_{j=1}^k\abs{D^{(j)}}$.  In particular, we have $\alpha^{(1)}=\alpha$.

\begin{table*}[ht]
\begin{center}
\begin{tabular}{ c l l l} 
 \hline
 & General error bound & Basic counting (constant $\beta$) & Stream model\\
 \hline
\cite{DBLP:conf/stoc/DworkNPR10}
& $\alpha^{(\log T)}\left(\frac{\varepsilon}{\log T},\frac{\delta}{\log T}\right)$
& $\frac{1}{\varepsilon}\log^{1.5}{T}$
& Finite stream\\
\cite{DBLP:conf/icalp/ChanSS10}
& $\alpha^{(\log t)}\left(\frac{\varepsilon}{\log t}, \frac{\delta}{\log t}\right)$
& $\frac{1}{\varepsilon}\log^{1.5}{t}$
& Infinite stream\\
\cite{DBLP:conf/asiacrypt/DworkNRR15}
& $\alpha^{(\log n_t)}\left(\frac{\varepsilon}{\log n_t}, \frac{\delta}{\log n_t} \right) + \frac{1}{\varepsilon} \log\frac{T}{\beta}$
& $\frac{1}{\varepsilon}(\log^{1.5}{n_t} +\log{T})$
& Finite stream \\
New
& $\alpha^{(\log m_t)}\left(\frac{\varepsilon}{\log m_t}, \frac{\delta}{\log m_t} \right)+\frac{1}{\varepsilon} \log\frac{t}{\beta} $
& $\frac{1}{\varepsilon}(\log^{1.5}{n_t} +\log{t})$
& Infinite stream\\
& where $m_t = n_t + \log\log t$ \\
\hline
\end{tabular}
\caption{Comparison of results over insertion-only streams.  These bounds hold for a single query at any time $t$. Over a finite stream, replacing $\beta$ with $\beta/T$  turns them into bounds that hold for all queries simultaneously.}\label{tab:compare}
\end{center}
\end{table*}

\medskip

Existing work on insertion-only streams~\cite {DBLP:conf/stoc/DworkNPR10, DBLP:conf/icalp/ChanSS10, DBLP:conf/asiacrypt/DworkNRR15} has only studied the basic counting problem.  However, all their algorithms are actually black-boxed, so they can be instantiated with any mechanism for other query classes.  Using the notation introduced above, we summarize their results, as well as our new result, in Table~\ref{tab:compare}.  Comparing the general error bounds, we see that \cite{DBLP:conf/icalp/ChanSS10} is better than \cite{DBLP:conf/stoc/DworkNPR10} since $t\le T$.  In particular, \cite{DBLP:conf/icalp/ChanSS10} works for an infinite $T$.  \cite{DBLP:conf/asiacrypt/DworkNRR15} is also better than \cite{DBLP:conf/stoc/DworkNPR10}, since $n_t\le T$ and $\alpha^{(\log T)}\left(\frac{\varepsilon}{\log T}, \frac{\delta}{\log T}\right) \ge \frac{1}{\varepsilon}\log\frac{T}{\beta}$ for any $\alpha$.  However, \cite{DBLP:conf/icalp/ChanSS10} and \cite{DBLP:conf/asiacrypt/DworkNRR15} are incomparable because there is no relationship between $(\log t)^{O(1)}$ and $\log T$ (e.g., it is $\log^{1.5} t$ vs.~$\log T$ for basic counting).  Our new result can be seen as achieving the best of both worlds:  It is better than   \cite{DBLP:conf/icalp/ChanSS10} because $m_t = n_t + \log\log t = O(t)$ and  $\alpha^{(\log t)}\left(\frac{\varepsilon}{\log t}, \frac{\delta}{\log t}\right) \ge \frac{1}{\varepsilon}\log\frac{t}{\beta}$ for any $\alpha$, and the improvement is more significant over sparse streams where $n_t \ll t$.  
Our new result is also better than \cite{DBLP:conf/asiacrypt/DworkNRR15}: First, our algorithm works for an infinite stream whereas \cite{DBLP:conf/asiacrypt/DworkNRR15} needs a finite $T$; even for the finite stream case, our second term $\frac{1}{\varepsilon} \log\frac{t}{\beta}$ is better than the $\frac{1}{\varepsilon} \log\frac{T}{\beta}$ term of \cite{DBLP:conf/asiacrypt/DworkNRR15}.  For the first term, ours matches that of \cite{DBLP:conf/asiacrypt/DworkNRR15} for $n_t>\log \log t$; if $n_t<\log\log t$, the problem would be trivial as simply answering $0$ for all queries has error at most $n_t = O(\log\log t)$.  We also give the explicit bounds for the basic counting problem under constant $\beta$ in Table \ref{tab:compare}, using the Laplace mechanism as the black box.
%as long as $N\ge \log\log t$.  In particular, our algorithm works with infinite $T$ and $N$ whereas \cite{DBLP:conf/asiacrypt/DworkNRR15} needs to know both $T$ and $N$ in advance.  Instantiating into the basic counting problem using the Laplace mechanism, our result is always better than \cite{DBLP:conf/asiacrypt/DworkNRR15} without the $N\ge \log\log t$ requirement.

All these algorithms can also support arbitrary queries by plugging in PMW, and we offer a similar improvement from $\log t$ to $\log(n_t + \log\log t)$.  Nevertheless, since PMW has a polynomial error $\tilde{O}(n_t^{1/2})$ (for $\delta>0$) or $\tilde{O}(n_t^{2/3})$ (for $\delta=0$), one usually does not pay much attention to the logarithmic factors, hence our improvement is minor in this case. Perhaps not realizing that the algorithm in \cite{DBLP:conf/icalp/ChanSS10} is black-boxed,  \cite{DBLP:conf/nips/CummingsKLT18} presented a dynamic, white-box version of PMW for the infinite insertion-only stream case, but the error is $\tilde{O}(n_t^{3/4})$ (for $\delta>0$). They also showed a black-box solution, but the error bound is also inferior to that of \cite{DBLP:conf/icalp/ChanSS10}: it is $\tilde{O}(n_t^{5/6})$ when instantiated with PMW.

\subsection{Differential Privacy for Fully-Dynamic Streams}
Building on our new algorithm for insertion-only streams, we next consider the fully-dynamic setting, which is the main technical result of this paper.  In the fully-dynamic model, the input is a stream in the form of $\bm{x}=((x_1,c_1),\dots,(x_T,c_T))$, where $x_i\in \mathcal{X}$, $c_i\in\{-1,1,\bot\}$, and $T$ may still be $\infty$.
At time $i$,
(1) if $c_i=\bot$, there is no update;
(2) if $c_i=1$, then $x_i\in\mathcal{X}$ is inserted into the dataset;
(3) if $c_i=-1$, then $x_i\in\mathcal{X}$ is deleted from the dataset ($x_i$ is assumed to exist in the dataset). 
Thus, the dataset at time $t$ is $D_t:=\cup_{i\leq t: c_i=1} \{x_i\} - \cup_{i\leq t: c_i=-1} \{x_i\}$.
Same as for the insertion-only setting, we wish to release an $\mathcal{M}^{(t)}(D_t)$ at each timestamp $t$ such that all releases are jointly differentially private, and those released up to time $t$ can be used to answer queries in $\mathcal{Q}$ on $D_t$.

Similar to the add-one/remove-one policy for insertion-only streams, we consider two fully-dynamic streams $\bm{x},\bm{x}' \in (\mathcal{X}\times\{-1,1,\bot\})^*$ to be neighbors, denoted $\bm{x}\sim \bm{x}'$, if one stream has one more update, which can either be an insertion or a deletion, than the other.  This also incorporates the other cases (up to a factor of $2$ in $\varepsilon$), e.g., the two streams insert different items at some timestamp, or one stream inserts an item while the other deletes an item.

%, or more formally, there exist two timestamps $t_1,t_2\in \mathbb{N}$, $t_1<t_2$, such that
% (1) for any time $i$ other than $t_1,t_2$ we have $(x_i,y_i) = (x'_i,y'_i)$, and
% (2) ($x_{t_1}=x_{t_2}$ and $y_1=1$ and $y_2=-1$) or ($x'_{t_1}=x'_{t_2}$ and $y'_1=1$ and $y'_2=-1$).

We still use $N_t$ to denote the number of updates until time $t$, i.e., $N_t=\sum_{i=1}^t \mathbf{1}[c_i\neq \bot]$, and $n_t =\abs{D_t}$.  Unlike the insertion-only case, we have $n_t\ll N_t$ for fully-dynamic streams.  This is the most important difference between the insertion-only setting and the fully-dynamic setting.  In particular, $N_t$ always increases over time, while $n_t$ fluctuates and may even hit $0$.  Ideally, we would like the error on $D_t$ to depend on $\abs{D_t}=n_t$, not $N_t$.

The standard approach for the fully-dynamic case is to divide the update stream into two insertion-only steams: 
one only containing insertions and one only containing deletions but treating these deletions as insertions.
Let $D_t^{\operatorname{ins}}:=\cup_{i\leq t: c_i=1} \{x_i\}$ be all items inserted up to time $t$, and $D_t^{\operatorname{del}}:=\cup_{i\leq t: c_i=-1} \{x_i\}$ all items deleted up to time $t$.
Then  $D_t = D_t^{\operatorname{ins}}- D_t^{\operatorname{del}}$.
By the union preserving property, we have $f(D_t) = f(D_t^{\operatorname{ins}}) - f(D_t^{\operatorname{del}})$, so we can run two separate instances of the insertion-only algorithm.  Using our insertion-only algorithm above, the error would be $O\left(\alpha^{(\log m_t)}\left(\frac{\varepsilon}{\log m_t}, \frac{\delta}{\log m_t}, \beta, N_t \right)+\frac{1}{\varepsilon} \log\frac{t}{\beta} \right)$, where $m_t=N_t +\log\log t$.  Most importantly, the error depends on the number of updates $N_t$, not the data size $n_t$. For basic counting with constant $\beta$, this becomes $O(\frac{1}{\varepsilon}(\log^{1.5} N_t+ \log t))$, which we consider as satisfactory, although there is still a polylogarithmic dependency on $N_t$.  In fact, there is a lower bound of $\Omega(\log t) = \Omega(\log N_t)$ for insertion-only streams  \cite{DBLP:conf/stoc/DworkNPR10,bun2015differentially}.

%Even if we could replace the $N_t$ by $n_t$, this would not be very significant, since the $\log t$ term is at still least $\log N_t$ (since $t\ge N_t$). In fact, there is a lower bound of $\Omega(\log t / \varepsilon)$ (for $\delta=0$) \cite{DBLP:conf/stoc/DworkNPR10} and $\Omega(\log^* t \log(1/\delta)/\varepsilon)$ (for $\delta>0$) \cite{bun2015differentially} even for insertion-only streams, which means that complete independence of $t$, hence $N_t$, is impossible.  
%while leaving a bound like $O\left(\frac{1}{\varepsilon}(\log^c n_t + \log t)\right)$ as an open question.

Thus, we are more interested in making PMW (or any other mechanisms with polynomial errors, such as half-space queries \cite{DBLP:conf/stoc/MuthukrishnanN12}) fully dynamic, because the simple solution above, when instantiated with PMW, would result in an error of $\tilde{O}(N_t^{1/2})$ (for $\delta>0$) or $\tilde{O}(N_t^{2/3})$ (for $\delta=0$).  In this paper, we present a black-box fully-dynamic algorithm that achieves an error of $\alpha^{(\tilde{O}(1))}\left(\tilde{O}(\varepsilon), \tilde{O}(\delta), \beta, \abs{D}=n_t+ \tilde{O}(1)\right)$.  Plugging in PMW, this yields the optimal error of $\tilde{O}(n_t^{1/2})$ (for $\delta>0$) or $\tilde{O}(n_t^{2/3})$ (for $\delta=0$), up to polylogarithmic factors.

\section{Existing Work for Insertion-Only Streams}

In this section, we briefly review prior constructions for insertion-only streams, which will be useful for our algorithms as well. We first state two DP composition theorems.

\subsection{DP Composition Theorems}

\begin{theorem}[Sequential Composition~\cite{DBLP:journals/fttcs/DworkR14}]
Let $\mathcal{M}_i:\mathbb{N}^\mathcal{U}\to \mathcal{Y}_i$ each be an $(\varepsilon_i,\delta_i)$-DP mechanism.  Then the composed mechanism $\mathcal{M}(D)=(\mathcal{M}_1(D),\dots,\mathcal{M}_k(D))$ is $(\sum_{i=1}^k \varepsilon_i, \sum_{i=1}^k \delta_i)$-DP.
\end{theorem}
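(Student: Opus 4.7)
The plan is to induct on $k$. The base case $k=1$ is exactly the $(\varepsilon_1,\delta_1)$-DP assumption on $\mathcal{M}_1$. For the inductive step, I would set $\mathcal{M}^{(k-1)}(D) := (\mathcal{M}_1(D),\ldots,\mathcal{M}_{k-1}(D))$, invoke the inductive hypothesis to obtain $(\varepsilon',\delta')$-DP with $\varepsilon' = \sum_{i<k}\varepsilon_i$ and $\delta' = \sum_{i<k}\delta_i$, and then reduce to the two-mechanism claim: combining an $(\varepsilon',\delta')$-DP mechanism with an \emph{independent} $(\varepsilon_k,\delta_k)$-DP mechanism yields an $(\varepsilon'+\varepsilon_k,\delta'+\delta_k)$-DP product. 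Independence is automatic since each $\mathcal{M}_i$ operates on $D$ alone, not on any other output.

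For the two-mechanism step, fix neighbors $D\sim D'$ and a measurable $Y \subseteq \mathcal{Y}^{(k-1)}\times \mathcal{Y}_k$. By independence,
\[
\Pr[(\mathcal{M}^{(k-1)}(D),\mathcal{M}_k(D))\in Y] \;=\; \mathbb{E}_{y'\sim \mathcal{M}^{(k-1)}(D)}\bigl[\,g(y')\,\bigr],
\]
where $g(y') := \Pr[\mathcal{M}_k(D)\in Y_{y'}]$ and $Y_{y'} = \{y_k : (y',y_k)\in Y\}$ is the slice. I would first apply the $(\varepsilon_k,\delta_k)$-DP of $\mathcal{M}_k$ pointwise in $y'$ to replace $g$ by $e^{\varepsilon_k}g' + \delta_k$, where $g'$ is the analogous function with $D'$, and then transport the outer expectation from $\mathcal{M}^{(k-1)}(D)$ to $\mathcal{M}^{(k-1)}(D')$ via the layer-cake identity $\mathbb{E}[g'] = \int_0^1 \Pr[g'>t]\,dt$ combined with the DP of $\mathcal{M}^{(k-1)}$ applied to each super-level set $\{g'>t\}$.

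The main obstacle is the $\delta$ arithmetic: this naive chain yields $(\varepsilon'+\varepsilon_k,\,e^{\varepsilon_k}\delta' + \delta_k)$-DP rather than the tighter $(\varepsilon'+\varepsilon_k,\,\delta'+\delta_k)$-DP stated in the theorem. To close the gap I would switch to the coupling characterization of approximate DP: $\mathcal{M}$ is $(\varepsilon,\delta)$-DP iff for every neighbors $D, D'$ there exist coupled random variables $Z_D, Z_{D'}$ that are $(\varepsilon,0)$-indistinguishable, with $\Pr[\mathcal{M}(D)=Z_D] \geq 1-\delta$ and $\Pr[\mathcal{M}(D')=Z_{D'}] \geq 1-\delta$. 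Coupling each $\mathcal{M}_i$ independently, the total failure event has probability at most $\sum_i\delta_i$ by a union bound, and on its complement the coupled joint outputs form a product of $(\varepsilon_i,0)$-indistinguishable pairs, whose composition is $(\sum_i\varepsilon_i,0)$-indistinguishable by a one-line product-of-densities calculation. Lifting back to the original mechanisms gives exactly $(\sum\varepsilon_i,\sum\delta_i)$-DP.
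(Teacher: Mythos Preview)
The paper does not prove this theorem; it is quoted as a standard result from the Dwork--Roth monograph and used as a black box. Your plan is the standard one and is nearly complete, but the coupling characterization you invoke is stated too symmetrically to close the gap. With probability-measure proxies $Z_D,Z_{D'}$ that are $(\varepsilon,0)$-indistinguishable and each within coupling distance $\delta$ of its original, the ``lifting back'' step only gives
\[
\Pr[\mathcal{M}(D)\in Y]\;\le\;\Pr[Z_D\in Y]+\delta\;\le\;e^{\varepsilon}\Pr[Z_{D'}\in Y]+\delta\;\le\;e^{\varepsilon}\bigl(\Pr[\mathcal{M}(D')\in Y]+\delta\bigr)+\delta,
\]
i.e.\ $(\sum_i\varepsilon_i,\,(1+e^{\sum_i\varepsilon_i})\sum_i\delta_i)$-DP, which is precisely the $e^{\varepsilon}\delta$ blow-up you set out to avoid. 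The symmetric two-proxy statement is not an exact if-and-only-if with the same $\delta$.

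The standard repair is to make the characterization one-sided (this is essentially Lemma~3.17 in Dwork--Roth): for each $i$ and each ordered neighboring pair there is a sub-probability measure $P_i'\le P_i$ pointwise with total mass at least $1-\delta_i$ and $P_i'(S)\le e^{\varepsilon_i}Q_i(S)$ for all $S$. Taking products, $\prod_i P_i'$ is dominated pointwise by $\prod_i P_i$, has mass deficit $1-\prod_i(1-\delta_i')\le\sum_i\delta_i$, and satisfies $\prod_i P_i'\le e^{\sum_i\varepsilon_i}\prod_i Q_i$ pointwise, yielding exactly $(\sum_i\varepsilon_i,\sum_i\delta_i)$-DP in one direction; the other direction follows by swapping $D$ and $D'$. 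In your coupling language this amounts to setting $Z_{D'}=\mathcal{M}(D')$ identically and perturbing only on the $D$ side, so that the third inequality above disappears.
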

We note that there are many improved versions of sequential composition \cite{DBLP:conf/focs/DworkRV10,DBLP:journals/tit/KairouzOV17, DBLP:conf/tcc/BunS16} with better dependencies on $k$.  Nevertheless, as $k$ is logarithmic in all our constructions, these improved versions do not offer better bounds than the basic version above for $\delta$ negligible in $T$.

\begin{theorem}[Parallel Composition~\cite{DBLP:conf/sigmod/McSherry09}]
\label{thm:standard-parallel}
Let $\mathcal{U}=\mathcal{U}_1 \cup \cdots \cup \mathcal{U}_k$ be a partitioning of the universe $\mathcal{U}$, and let $\mathcal{M}_i:\mathbb{N}^{\mathcal{U}_i}\to \mathcal{Y}_i$ each be an $(\varepsilon_i,\delta_i)$-DP mechanism.  Then the composed mechanism $\mathcal{M}(D)=(\mathcal{M}_1(D\cap \mathcal{U}_1),\dots,\mathcal{M}_k(D\cap \mathcal{U}_k))$ is $(\max_{i=1}^k \varepsilon_i, \max_{i=1}^k \delta_i)$-DP.
\end{theorem}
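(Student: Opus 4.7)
The plan is to reduce the privacy analysis to a single-block DP argument by exploiting the fact that a neighboring modification touches exactly one block of the partition. Fix any two neighboring datasets $D \sim D'$, so they differ in exactly one item $x \in \mathcal{U}$. Because $\mathcal{U}_1, \ldots, \mathcal{U}_k$ partition $\mathcal{U}$, there is a unique index $i^\star$ with $x \in \mathcal{U}_{i^\star}$. For every $i \neq i^\star$ we have $D \cap \mathcal{U}_i = D' \cap \mathcal{U}_i$, so the marginals of $\mathcal{M}_i(D \cap \mathcal{U}_i)$ and $\mathcal{M}_i(D' \cap \mathcal{U}_i)$ are identical; meanwhile $D \cap \mathcal{U}_{i^\star}$ and $D' \cap \mathcal{U}_{i^\star}$ are themselves neighboring inputs to $\mathcal{M}_{i^\star}$, which is $(\varepsilon_{i^\star}, \delta_{i^\star})$-DP.

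Next I would exploit the independence of the internal randomness of the $\mathcal{M}_i$'s. For any measurable output set $Y \subseteq \mathcal{Y}_1 \times \cdots \times \mathcal{Y}_k$, condition on the outputs of all blocks other than $i^\star$: letting $y_{-i^\star} = (y_1, \ldots, y_{i^\star-1}, y_{i^\star+1}, \ldots, y_k)$ and $Y_{y_{-i^\star}} = \{y_{i^\star} \in \mathcal{Y}_{i^\star} : (y_1,\ldots,y_k) \in Y\}$ denote the slice,
\[
\Pr[\mathcal{M}(D) \in Y] = \int \Pr\!\bigl[\mathcal{M}_{i^\star}(D\cap\mathcal{U}_{i^\star}) \in Y_{y_{-i^\star}}\bigr]\, d\mu(y_{-i^\star}),
\]
where $\mu$ is the joint law of the outputs of $\mathcal{M}_i$ for $i \neq i^\star$, which is the same under $D$ and $D'$ by the previous paragraph.

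Now apply the DP guarantee of $\mathcal{M}_{i^\star}$ slice by slice:
\[
\Pr\!\bigl[\mathcal{M}_{i^\star}(D\cap\mathcal{U}_{i^\star}) \in Y_{y_{-i^\star}}\bigr] \le e^{\varepsilon_{i^\star}} \Pr\!\bigl[\mathcal{M}_{i^\star}(D'\cap\mathcal{U}_{i^\star}) \in Y_{y_{-i^\star}}\bigr] + \delta_{i^\star},
\]
then integrate against $\mu$. Since $\mu$ is a probability measure and $\varepsilon_{i^\star} \le \max_i \varepsilon_i$, $\delta_{i^\star} \le \max_i \delta_i$, this yields the required bound $e^{\max_i \varepsilon_i}\Pr[\mathcal{M}(D') \in Y] + \max_i \delta_i$. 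The only technical subtlety is being careful with measurability when the $\mathcal{Y}_i$ are general spaces, but standard Fubini suffices because independence makes the joint law a product measure; on discrete output spaces the integral is a straightforward sum over slices. I do not expect a genuine obstacle here — the main idea is simply that "one neighbor-change touches only one block," and everything else follows from independence plus the definition of DP.
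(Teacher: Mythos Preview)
Your argument is correct and is the standard proof of parallel composition: the neighboring change lives in a unique block $\mathcal{U}_{i^\star}$, the other blocks' outputs have identical distributions under $D$ and $D'$, and integrating the $(\varepsilon_{i^\star},\delta_{i^\star})$-DP bound of $\mathcal{M}_{i^\star}$ against that common measure gives $(\max_i \varepsilon_i, \max_i \delta_i)$-DP. Note that the paper does not actually prove this theorem---it is stated as a known result with a citation to McSherry---so there is no in-paper proof to compare against; your write-up would serve perfectly well as a self-contained justification.
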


It is possible to have $k=\infty$ for both sequential and parallel composition.  In this case, the $\sum$ and $\max$ would be replaced by $\lim \sum$ and $\sup$, respectively.

\subsection{Finite Stream, Dense Updates}

For a finite stream, i.e., $T$ is given in advance, we can build a binary tree on the timestamps $\{1,\dots, T\}$. This tree corresponds to a dyadic decomposition of the time domain, where each tree node $v$ is associated with a dyadic interval, which we denote by $D^{(v)}$.
Let $\mathcal{M}$ be a static DP mechanism with error function $\alpha^{(k)}$. 
The binary tree mechanism in \cite{DBLP:conf/stoc/DworkNPR10} releases $\mathcal{M}(D^{(v)})$ at the end of the interval for every node $v$ in the tree.  All nodes on the same level enjoy parallel composition\footnote{When using parallel composition here, the universe $\mathcal{U}$ in Theorem \ref{thm:standard-parallel} is different from the input domain $\mathcal{X}$ defined for the linear queries $\mathcal{Q}$.  More precisely, here we apply Theorem \ref{thm:standard-parallel} over the universe $\mathcal{U}=\mathcal{X} \times [T]$, and the partitioning is on $[T]$.}, while nodes from different levels use sequential composition.
Since any timestamp is contained by at most $\log T$ dyadic intervals, it suffices to run an $(\frac{\varepsilon}{\log T}, \frac{\delta}{\log T})$-DP mechanism at each node $v$ to guarantee $(\varepsilon,\delta)$-DP of the whole mechanism.  As each $D_t$ can be covered by at most $\log T$ disjoint time ranges\footnote{Technically, this is $\log t$, so a $\sqrt{\log T}$ in the bound can be improved to $\sqrt{\log t}$, but this is minor, and this result will be subsumed in the next subsection anyway.}, we obtain the error bound in the first row of Table \ref{tab:compare}.
Recent work~\cite{DBLP:conf/soda/HenzingerUU23,DBLP:conf/icml/Fichtenberger23} further improved the constant factors using the matrix mechanism~\cite{DBLP:journals/vldb/LiMHMR15}.

\subsection{Infinite Stream, Dense Updates}
The binary tree mechanism \cite{DBLP:conf/stoc/DworkNPR10} requires $T$ to be known in advance so that the privacy budgets $\varepsilon$ and $\delta$ can be divided appropriately.   For an infinite $T$, Chan et al.~\cite{DBLP:conf/icalp/ChanSS10} presented a clever construction that gets around the issue.  It divides the infinite stream into disjoint time ranges of exponentially growing sizes: $[1, 2), [2, 4), [4, 8), [8, 16), \dots$, and releases $\mathcal{M}(\mathcal{R})$ for each such time range $\mathcal{R}$, each of which uses privacy budget $(\frac{\varepsilon}{2}, \frac{\delta}{2})$. By parallel composition, all of them are jointly $(\frac{\varepsilon}{2}, \frac{\delta}{2})$-DP.  These releases allow us to answer queries on $D_t$ for $t=2^i$, and the error is 
$\alpha^{(\log t)}(\frac{\varepsilon}{2}, \frac{\delta}{2})$.  For $2^i <t<2^{i+1}$, $f(D_t)$ can be partitioned into $f([1,t))=f([1,2^i)) + f([2^i, t))$.  Then, an instance of the binary tree mechanism is run on $D([2^i,2^{i+1}))$ for each $i$ with a finite $T=2^i$ and privacy budget $(\frac{\varepsilon}{2}, \frac{\delta}{2})$.  Parallel composition also applies here.  The total error is dominated by that of the binary tree mechanism, but now $T\le t$ for each instance.  This gives us the second row in Table \ref{tab:compare}.

It should be clear that this mechanism can also handle the case where multiple insertions arrive at the same timestamp, which will be needed in the next subsection.

\subsection{Finite Stream, Sparse Updates}
Dwork et al.~\cite{DBLP:conf/asiacrypt/DworkNRR15} investigated the case where the update stream is sparse, i.e., $n_t \ll t$, which is motivated by many real-time applications that tend to use very small intervals between timestamps.  At the heart of their algorithm is the \textit{private partitioning} mechanism, which in an online fashion divides the update stream into segments such that with probability at least $1-\beta$, (A) each segment contains $O(\frac{1}{\varepsilon}\log\frac{T}{\beta})$ insertions, and (B) at most $O(n_t)$ segments are produced by time $t$.  Then, one can feed each segment as a ``super timestamp'' with multiple insertions to a dense-update algorithm, e.g., the mechanism \cite{DBLP:conf/icalp/ChanSS10} described above\footnote{In their paper \cite{DBLP:conf/asiacrypt/DworkNRR15}, they applied the binary tree mechanism on the segments, which results in an inferior bound where $n_t$ is replaced by an upper bound on $n_t$ given in advance. }.  Property (A) induces an additive error of $O(\frac{1}{\varepsilon}\log\frac{T}{\beta})$, while property (B) implies that the $t$ can be replaced by $n_t$ in the error bound.  This yields the third row in Table \ref{tab:compare}.

We now describe the private partitioning mechanism~\cite{DBLP:conf/asiacrypt/DworkNRR15}. In essence, it iteratively invokes the \textit{sparse vector technique (SVT)} \cite{DBLP:journals/fttcs/DworkR14,DBLP:journals/pvldb/LyuSL17} on the update stream, which is shown in Algorithm \ref{alg:svt}.

\vspace{-2mm}

\begin{algorithm}[ht]
\caption{Sparse Vector Technique}
\label{alg:svt}
\KwIn{Dataset $D$, privacy budget $\varepsilon$, threshold $\theta$, and a (possibly infinite) sequence of linear queries $f_i$}
\KwOut{A privatized index $\tilde{i}$ of the first query that $f_i(D)$ is above $\theta$}

$\hat{\theta} \gets \theta+\operatorname{Lap}(\frac{2}{\varepsilon})$\;
\ForEach{$i\gets 1, 2,\dots$}{
\If{$f_i(D)+\operatorname{Lap}(\frac{4}{\varepsilon}) > \hat{\theta}$}{
\textbf{Output} $\tilde{i}=i$ and \textbf{Halt}\;
}
}
\end{algorithm}

\begin{theorem}[\cite{DBLP:journals/fttcs/DworkR14,DBLP:conf/pods/DongY23}]
\label{thm:SVT}
The SVT mechanism with parameter $\varepsilon$ satisfies $\varepsilon$-DP.
For any sequence of queries $f_1,\dots,f_T$ (where $T$ can be infinite) and threshold $\theta$, we have the following guarantees on its accuracy:
(1) if there exists $t_1\leq T$ such that $f_i(D) < \theta-\frac{8}{\varepsilon}\ln\frac{2t_1}{\beta}$ for all $i\leq t_1$, then with probability $1-\beta$, SVT does not halt before $t_1$; and (2) if there exists $t_2\leq T$ such that $f_{t_2}(D)\geq \theta+\frac{6}{\varepsilon}\ln\frac{2}{\beta}$, then with probability $1-\beta$, SVT halts for some $\tilde{i}\leq t_2$. Further, the output satisfies $f_{\tilde{i}}(D)\geq \theta-\frac{6}{\varepsilon}\ln\frac{2t_2}{\beta}$.
\end{theorem}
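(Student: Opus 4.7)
The plan is to prove the privacy guarantee and the two accuracy guarantees separately; the former uses the classical SVT coupling argument and the latter reduces to Laplace tail bounds.

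For privacy, the crucial point is that although SVT may examine an unbounded number of queries, it releases only a single index (the one on which it halts), so naive sequential composition is avoided. Fix neighboring datasets $D \sim D'$ and a target output index $\tilde{i}$. I would express the probability that SVT outputs $\tilde{i}$ on $D$ as an integral over the threshold noise $Y_0 \sim \operatorname{Lap}(2/\varepsilon)$ of the joint probability that the noisy queries $f_i(D) + Z_i$ remain below $\theta + Y_0$ for all $i < \tilde{i}$ and exceed it at $i = \tilde{i}$, where $Z_i \sim \operatorname{Lap}(4/\varepsilon)$. Using the sensitivity-$1$ property of each $f_i$, I would couple the randomness under $D$ and $D'$ by shifting $Y_0$ by $1$ and $Z_{\tilde{i}}$ by $2$: these shifts make every ``below threshold'' event under $D$ imply the corresponding event under $D'$ for the queries $i < \tilde{i}$ (whose noises $Z_i$ are left untouched), and preserve the ``above threshold'' event at $\tilde{i}$. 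The density ratios of the two shifted Laplace noises are each at most $e^{\varepsilon/2}$, giving a total ratio of $e^\varepsilon$.

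For accuracy part (1), assuming $f_i(D) < \theta - \frac{8}{\varepsilon}\ln\frac{2t_1}{\beta}$ for all $i \leq t_1$, I would apply one-sided Laplace tail bounds to show that with probability at least $1 - \beta/2$, $Y_0 \geq -\frac{2}{\varepsilon}\ln\frac{2}{\beta}$, and by a union bound over $i \leq t_1$, with probability at least $1 - \beta/2$, $Z_i \leq \frac{4}{\varepsilon}\ln\frac{2t_1}{\beta}$ for all such $i$. Combining these gives $f_i(D) + Z_i < \hat{\theta}$ for every $i \leq t_1$, so SVT does not halt. For part (2), symmetric tail bounds give $Y_0 \leq \frac{2}{\varepsilon}\ln\frac{2}{\beta}$ and $Z_{t_2} \geq -\frac{4}{\varepsilon}\ln\frac{2}{\beta}$ with probability at least $1 - \beta$; together with the hypothesis $f_{t_2}(D) \geq \theta + \frac{6}{\varepsilon}\ln\frac{2}{\beta}$, this forces $f_{t_2}(D) + Z_{t_2} > \hat{\theta}$, so SVT halts at some $\tilde{i} \leq t_2$. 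For the accuracy of the reported index, conditioning on halting at $\tilde{i}$ gives $f_{\tilde{i}}(D) > \theta + Y_0 - Z_{\tilde{i}}$; a union bound over all candidates $\tilde{i} \leq t_2$ ensures $Z_{\tilde{i}} \leq \frac{4}{\varepsilon}\ln\frac{2t_2}{\beta}$ with high probability, which together with the lower bound on $Y_0$ yields the claimed $f_{\tilde{i}}(D) \geq \theta - \frac{6}{\varepsilon}\ln\frac{2t_2}{\beta}$.

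The main obstacle is the privacy argument: maintaining a fixed privacy cost $\varepsilon$ over a potentially unbounded query sequence requires the above coupling/change-of-variables trick that treats the threshold noise and the halting-query noise specially, rather than applying sequential composition across queries. The accuracy bounds are otherwise routine Laplace concentration coupled with union bounds over at most $t_1$ or $t_2$ queries.
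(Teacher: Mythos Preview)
The paper does not prove this theorem; it is stated as a cited result from \cite{DBLP:journals/fttcs/DworkR14,DBLP:conf/pods/DongY23} and used as a black box, so there is no in-paper proof to compare against. Your proposal follows the standard argument from those references: the coupling/change-of-variables trick (shift the threshold noise by $1$ and the halting-query noise by $2$, paying $e^{\varepsilon/2}$ each) is exactly the canonical SVT privacy proof, and the accuracy claims are indeed routine Laplace tail bounds plus union bounds over the relevant query indices. The only thing to watch is the probability bookkeeping in part~(2): you need simultaneously an upper bound on $Y_0$ and a lower bound on $Z_{t_2}$ (to force halting by $t_2$) \emph{and} a lower bound on $Y_0$ together with upper bounds on all $Z_i$, $i\le t_2$ (to certify $f_{\tilde{i}}(D)\ge \theta-\frac{6}{\varepsilon}\ln\frac{2t_2}{\beta}$), all within a total failure probability of $\beta$; the stated constants accommodate this, but make the allocation explicit when you write it out.
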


To apply SVT for private partitioning, the idea in \cite{DBLP:conf/asiacrypt/DworkNRR15} is to consider the update stream $\bm{x}$ as $D$, and ask the queries $f_i(\bm{x}) = \abs{\{j\le i : x_j \ne \bot\}}$ with threshold $\theta=\Theta(\frac{1}{\varepsilon}\log\frac{T}{\beta})$. 
It can then be shown that when SVT outputs an $\tilde{i}$, it must have seen $\Theta(\frac{1}{\varepsilon}\log\frac{T}{\beta})$ updates with probability $1-\beta$.  Then the current segment is closed and  another SVT instance is started, and the process repeats. 
While the utility of this private partitioning mechanism (i.e., property (A) and (B) above) follows easily from Theorem \ref{thm:SVT}, the proof of privacy is nontrivial.  It is tempting to simply apply parallel composition, since the SVT instances are applied on disjoint segments of the stream.  However, Theorem \ref{thm:standard-parallel} requires the partitioning of the universe (the timestamps in this case) to be given in advance.  In particular, the partitioning should be independent of the internal randomness of the mechanisms in the composition, but in this case, the partitioning is exactly the outputs of the mechanisms, so there is no independence.  As a result, \cite{DBLP:conf/asiacrypt/DworkNRR15}  proved the privacy of this private partitioning mechanism from scratch, without relying on the privacy of SVT.

\section{New Algorithm for Infinite Insertion-Only Streams}
The private partitioning mechanism of \cite{DBLP:conf/asiacrypt/DworkNRR15} only works for a finite $T$, as it invokes a series of SVT instances with $\theta=\Theta(\frac{1}{\varepsilon}\log\frac{T}{\beta})$.  The key component in our new algorithm is an adaptation of their private partitioning mechanism to an infinite $T$.  Then we feed the segments into the infinite stream algorithm of \cite{DBLP:conf/icalp/ChanSS10}.

\subsection{Adaptive Parallel Composition}
We first give a new and simpler proof for the privacy of the private partitioning mechanism of \cite{DBLP:conf/asiacrypt/DworkNRR15}, by developing an adaptive version of the parallel composition theorem.  This immediately proves the privacy of private partitioning from the privacy of SVT.  Other than private partitioning, we imagine that this adaptive parallel composition theorem could also be useful in other applications. 

We first extend the DP definition to a mechanism that, in addition to its original output, also \textit{declares} the sub-universe that it has queried on.

\begin{definition}
\label{def:DPdeclare}
Given a mechanism $\mathcal{M}:\mathbb{N}^\mathcal{U} \to \mathcal{Y} \times 2^\mathcal{U}$, we say that $\mathcal{M}$ is an $\varepsilon$-DP mechanism with declaration, if for any neighboring instances $D\overset{x}{\sim} D'\in \mathbb{N}^\mathcal{U}$ that differ by item $x\in\mathcal{U}$ and any output $(y, U)\in \mathcal{Y}\times 2^\mathcal{U}$ we have
\begin{alignat}{3}
\label{eq:xinU}
&\Pr[\mathcal{M}(D)=(y,U)]\leq e^\varepsilon\cdot &&\Pr[\mathcal{M}(D')=(y,U)]\,, &&\quad \text{if } x\in U\,; \\
\label{eq:xnotinU}
&\Pr[\mathcal{M}(D)=(y,U)]= &&\Pr[\mathcal{M}(D')=(y,U)]\,, &&\quad\text{if } x\not\in U\,.
\end{alignat}
% \begin{align}
% \label{eq:xinU}
% \Pr[\mathcal{M}(D)=(o,U)]\leq e^\varepsilon\cdot \Pr[\mathcal{M}(D')=(o,U)], &\qquad \text{if } x\in U; \\
% \label{eq:xnotinU}
% \Pr[\mathcal{M}(D)=(o,U)]= \Pr[\mathcal{M}(D')=(o,U)], &\qquad \text{if } x\not\in U.
% \end{align}
\end{definition}

Note that an $\varepsilon$-DP mechanism with declaration is also $\varepsilon$-DP; in fact, Definition \ref{def:DPdeclare} imposes a stronger requirement \eqref{eq:xnotinU} for the $x\not \in U$ case than standard $\varepsilon$-DP.   Conversely, any standard $\varepsilon$-DP mechanism  $\mathcal{M}':\mathbb{N}^\mathcal{U}\to \mathcal{O}$ can be turned into an $\varepsilon$-DP mechanism with declaration $\mathcal{M}(D)=(\mathcal{M}'(D), \mathcal{U})$, which always declares the entire universe $\mathcal{U}$, but such a trivial declaration does not allow parallel composition.
We present adaptive parallel composition in Theorem~\ref{thm:parallel}.

\begin{theorem}[Adaptive Parallel Composition]\label{thm:parallel}
Let $\mathcal{M}_1,\dots,\mathcal{M}_k$ (where $k$ can be infinite) each be an $\varepsilon$-DP mechanism with declaration, where the mechanisms may be chosen adaptively (the choice of $\mathcal{M}_{i+1}$ may depend on the previous outputs $(y_1, U_1),\dots,(y_i, U_{i+1})$).
If the $U_i$'s declared by the mechanisms are always pairwise disjoint, then the composed mechanism $\mathcal{M}=(\mathcal{M}_1,\dots,\mathcal{M}_k)$ is $\varepsilon$-DP.
\end{theorem}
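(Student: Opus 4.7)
The plan is to establish a stronger pointwise statement: for any neighboring $D \sim D'$ differing by item $x$, and any sequence of outputs $\bar{y} = ((y_1, U_1), \dots, (y_k, U_k))$ in the support of $\mathcal{M}$, to show $\Pr[\mathcal{M}(D) = \bar{y}] \le e^\varepsilon \cdot \Pr[\mathcal{M}(D') = \bar{y}]$. A pointwise multiplicative bound is enough, since summing (or integrating, in the continuous case) over any measurable event $Y$ immediately yields $\Pr[\mathcal{M}(D) \in Y] \le e^\varepsilon \cdot \Pr[\mathcal{M}(D') \in Y]$, which is standard $\varepsilon$-DP.

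The key computation is a chain-rule decomposition. Because each $\mathcal{M}_{i+1}$ is a deterministic function of the previous outputs $(y_1, U_1), \dots, (y_i, U_i)$, conditioning on the same prefix selects the same mechanism for both $D$ and $D'$. Writing $\mathcal{M}_i^{\bar{y}_{<i}}$ for this adaptively determined $i$-th mechanism, I would expand
\[
\Pr[\mathcal{M}(D) = \bar{y}] = \prod_{i=1}^k \Pr\!\left[\mathcal{M}_i^{\bar{y}_{<i}}(D) = (y_i, U_i)\right],
\]
and likewise for $D'$. Definition \ref{def:DPdeclare} then gives, factor by factor, either exact equality (when $x \notin U_i$) via \eqref{eq:xnotinU}, or a multiplicative $e^\varepsilon$ bound (when $x \in U_i$) via \eqref{eq:xinU}. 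Since the $U_i$'s in $\bar{y}$ are pairwise disjoint by hypothesis, $x$ lies in at most one of them, so at most one factor in the product incurs the $e^\varepsilon$ slack and the rest cancel exactly, yielding the desired pointwise bound.

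The main obstacle is the apparent circularity between the partition and the data: unlike the static parallel composition of Theorem \ref{thm:standard-parallel}, where the universe partition is fixed up front and independent of the mechanisms' randomness, here the declared $U_i$'s are themselves random outputs that depend on $D$. The crucial ingredient unblocking the argument is the strict equality \eqref{eq:xnotinU}, which is strictly stronger than ordinary $\varepsilon$-indistinguishability. This exact equality is what allows the factors indexed by $x \notin U_i$ to cancel between $D$ and $D'$ even though the marginal law of $U_i$ may depend on the data; with only an $e^\varepsilon$ bound on those factors, the privacy loss would accumulate linearly in $k$ and the theorem would fail. One minor technical wrinkle is the $k=\infty$ case, which I would handle by first establishing the bound for every finite truncation of the adaptive process and then passing to the limit through the natural filtration; a similar remark covers continuous output spaces via the Radon–Nikodym derivative in place of the pointwise ratio.
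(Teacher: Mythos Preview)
Your proposal is correct and follows essentially the same approach as the paper: a chain-rule factorization of $\Pr[\mathcal{M}(D)=\bar{y}]$, applying the strict equality \eqref{eq:xnotinU} to every factor with $x\notin U_i$ and the $e^\varepsilon$ bound \eqref{eq:xinU} to the (at most one) factor with $x\in U_i$. Your explicit remarks on why the same prefix selects the same adaptively chosen mechanism, on the indispensability of the strict equality in \eqref{eq:xnotinU}, and on the $k=\infty$ and continuous-output cases are sound elaborations that the paper's proof leaves implicit.
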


\begin{proof}
For any neighboring datasets $D\overset{x}{\sim} D'$ that differ by item $x\in\mathcal{U}$ and any sequence of outputs $(y_1, U_1),\dots, (y_k, U_k)$, let $U_i$ be the unique set containing $x$ (let $i=k+1$ if no such $U_i$).
\begin{alignat*}{4}
  &       && &&\Pr[\mathcal{M}(D)=(y_1, y_2, \dots, y_k, U_1, U_2,\dots U_k)] \\
= &       && \Pi_{j=1}^{i-1} &&\Pr[\mathcal{M}_j(D)=(y_j,U_j) \mid (y_1, \dots, y_{j-1}, U_1,\dots, U_{j-1})] \\
  & \cdot\, && && \Pr[\mathcal{M}_i(D)=(y_i,U_i) \mid (y_1, \dots, y_{i-1}, U_1,\dots, U_{i-1})] \\
  & \cdot\, && \Pi_{j=i+1}^{k} && \Pr[\mathcal{M}_j(D)=(y_j,U_j) \mid (y_1, \dots, y_{j-1}, U_1,\dots, U_{j-1})]\\
\leq &    && \Pi_{j=1}^{i-1} && \Pr[\mathcal{M}_j(D')=(y_j,U_j) \mid (y_1, \dots, y_{j-1}, U_1,\dots, U_{j-1})] \\
  & \cdot\, && e^\varepsilon \cdot && \Pr[\mathcal{M}_i(D')=(y_i,U_i) \mid (y_1, \dots, y_{i-1}, U_1,\dots, U_{i-1})] \\
  & \cdot\, && \Pi_{j=i+1}^{k} && \Pr[\mathcal{M}_j(D')=(y_j,U_j) \mid (y_1, \dots, y_{j-1}, U_1,\dots, U_{j-1})]\\
= & && e^\varepsilon \cdot && \Pr[\mathcal{M}(D')=(y_1, y_2, \dots, y_k, U_1, U_2,\dots U_k)]\,.
\end{alignat*}
For the inequality, observe that $x \not\in U_j$ for $j\ne i$, so we apply \eqref{eq:xnotinU} for $j\ne i$, and apply \eqref{eq:xinU} for $\mathcal{M}_i$.
\end{proof}

Applying SVT on the update stream $\bm{x}$, the universe is the timestamps $[T]$, and the mechanism has no output $y$ but only the declaration $U$, which is the segment it produces.  The next SVT instance is applied after the last segment, so the declarations are disjoint.   Thus, the privacy of the private partitioning mechanism follows immediately from Theorem \ref{thm:parallel}.  Furthermore, this also holds even if each SVT uses a different threshold $\theta$ chosen adaptively, which is exactly what we will do next in order to extend private partitioning to infinite streams.

\subsection{Private Partitioning for Infinite Streams}
\label{sec:partition}
The finite private partitioning mechanism uses SVT instances with $\theta=\Theta(\frac{1}{\varepsilon}\log\frac{T}{\beta})$.  To deal with an infinite $T$, our infinite private partitioning mechanism (Algorithm~\ref{alg:partition}) uses quadratically increasing values for $T$.  This allows us to bound the number of updates inside each segment by $O(\frac{1}{\varepsilon}\log\frac{t}{\beta})$, plus at most $O(\log\log t)$ extra segments by time $t$. In addition, to make sure that these guarantees hold simultaneously for infinitely many $t$'s by a union bound, we allocate $O(\frac{\beta}{j^2})$ failure probability to each SVT instance. We more formally prove these utility guarantees below.

\begin{algorithm}[ht]
\KwIn{Update stream $\bm{x}=(x_1,\dots,x_t,\dots)$, privacy budget $\varepsilon$, failure probability $\beta$}
\KwOut{Segments $s_1=[1,t_1], s_2=[t_1+1,t_2],\dots$}
Initialize $t_0\gets 0$, $j\gets 1$, $T_1\gets 2$, $\beta_1 \gets \frac{6}{\pi^2}\beta$, $\theta_1 \gets \frac{7}{\varepsilon}\ln\frac{2T_1 }{\beta_1}$\; 
Initiate an SVT instance with privacy budget $\varepsilon$ and threshold $\theta_1$\; 
\ForEach{$t\gets 1,2,\dots$}{
Ask the query $\abs{\{t_{j-1} < i\le t : x_i \ne \bot\}}$ to SVT\;
\If{SVT halts with output $t$ OR $t \geq T_j$}{
Close the current segment, i.e., \textbf{output} $t_j\gets t$\;
$j\gets j+1$, $T_j \gets t^2$, $\beta_j \gets \frac{6}{\pi^2j^2}\beta$, $\theta_j \gets \frac{7}{\varepsilon}\ln\frac{2 T_j }{\beta_j}$\; 
Initiate a new SVT instance from $x_{t+1}$ with privacy budget $\varepsilon$ and threshold $\theta_j$\;
}
}
\caption{Infinite Private Partitioning}\label{alg:partition}
\end{algorithm}

\begin{lemma}
Algorithm~\ref{alg:partition} is $\varepsilon$-DP. With probability at least $1-\beta$, the following holds for all $t$: (1) every segment produced before time $t$ contains $O(\frac{1}{\varepsilon}\log \frac{t}{\beta})$ insertions, and (2) $O(n_t+\log\log t)$ segments are produced by time $t$.
\end{lemma}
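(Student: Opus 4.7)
The plan is to split the lemma into its privacy claim and its two utility claims, exploiting the adaptive parallel composition theorem (Theorem~\ref{thm:parallel}) and the SVT guarantees (Theorem~\ref{thm:SVT}).

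For privacy, I would cast the $j$-th SVT in Algorithm~\ref{alg:partition} as an $\varepsilon$-DP mechanism with declaration (Definition~\ref{def:DPdeclare}): the declared sub-universe is the segment $U_j = [t_{j-1}+1, t_j]$ of timestamps this SVT actually consumes. Since the queries asked inside the $j$-th SVT read only $\{x_i : i \in U_j\}$, flipping an item whose timestamp lies outside $U_j$ changes no query value and hence leaves the SVT's output distribution unchanged, giving equation \eqref{eq:xnotinU}; for a flipped timestamp inside $U_j$ each query has sensitivity at most $1$, and standard SVT privacy (Theorem~\ref{thm:SVT}) gives equation \eqref{eq:xinU}. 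The declared segments are disjoint by construction, so Theorem~\ref{thm:parallel} composes the SVTs into one $\varepsilon$-DP mechanism; crucially, this accommodates the adaptive choice of $\theta_{j+1}$, $T_{j+1}$, and $\beta_{j+1}$ from the previously released $t_j$.

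For the insertion bound, in each closed segment $j$ I apply condition (2) of Theorem~\ref{thm:SVT} with $t_2$ equal to the first time the running count $f_i = |\{t_{j-1} < k \leq i : x_k \neq \bot\}|$ reaches $\theta_j + \frac{6}{\varepsilon}\ln\frac{2}{\beta_j}$. If this $t_2 \leq T_j$, then with probability $\geq 1-\beta_j$ SVT halts by $t_2$, bounding the segment's insertion count by $\theta_j + \frac{6}{\varepsilon}\ln\frac{2}{\beta_j} = O(\frac{1}{\varepsilon}\log\frac{T_j}{\beta_j})$; otherwise the segment hard-stops at $T_j$ with even fewer insertions. Since $T_j = t_{j-1}^2 \leq t^2$ whenever the segment closes by time $t$, and $\beta_j = \Theta(\beta/j^2)$ with $j = O(t)$ from the next paragraph, this collapses to $O(\frac{1}{\varepsilon}\log\frac{t}{\beta})$.

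For the segment-count bound, I classify each closing as type H (forced by $t_j = T_j$) or type S (SVT halted first). A type-H closing squares the window, $T_{j+1} = t_j^2 = T_j^2$, so starting from $T_1 = 2$ at most $O(\log\log t)$ type-H closings can fit before time $t$. For type S, I rule out spurious halting on a zero-insertion segment by a direct tail bound: with probability $\geq 1 - O(\beta_j)$ all the Laplace noises in SVT have magnitude below $\frac{6}{\varepsilon}\ln\frac{2T_j}{\beta_j} < \theta_j = \frac{7}{\varepsilon}\ln\frac{2T_j}{\beta_j}$, so SVT cannot fire before seeing at least one insertion. Hence there are at most $n_t$ type-S closings, for $O(n_t + \log\log t)$ segments in total. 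A final union bound uses $\sum_j \beta_j = \sum_j \frac{6\beta}{\pi^2 j^2} = \beta$ to ensure both properties hold for all $t$ simultaneously with probability $\geq 1-\beta$. The main obstacle is the privacy analysis, where the adaptive choice of thresholds from data-dependent outputs would defeat standard parallel composition; Theorem~\ref{thm:parallel} is designed precisely for this and is the key tool. A secondary subtlety on the utility side is that condition (1) of Theorem~\ref{thm:SVT} with $\theta_j = \frac{7}{\varepsilon}\ln\frac{2T_j}{\beta_j}$ does not by itself rule out spurious type-S closings (the would-be threshold $\theta_j - \frac{8}{\varepsilon}\ln\frac{2T_j}{\beta_j}$ is negative), so a direct Laplace-tail estimate is required.
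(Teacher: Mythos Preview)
Your proposal is correct. Privacy via Theorem~\ref{thm:parallel} and the per-segment insertion upper bound via part~(2) of Theorem~\ref{thm:SVT} match the paper exactly. For the segment-count bound the two proofs use different decompositions: the paper splits on whether $[t_{j-1}+1,T_j]$ contains at least $\theta_j+\frac{6}{\varepsilon}\ln\frac{2}{\beta_j}$ insertions, invoking the lower-bound clause of Theorem~\ref{thm:SVT}(2) in the ``many'' case (so each such segment has $\Omega(\frac{1}{\varepsilon}\log\frac{T_j}{\beta_j})\ge 1$ insertions) and asserting $T_{j+1}=T_j^2$ in the ``few'' case; you instead split on whether the segment was hard-stopped at $T_j$ (type~H, handled by the same squaring argument, using that $T_j$ is increasing) or closed by SVT (type~S, shown to contain $\ge 1$ insertion via a direct Laplace-tail bound). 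Your decomposition is arguably cleaner: the paper's assertion $T_{j+1}=T_j^2$ in its ``few'' case tacitly presumes $t_j=T_j$, which would fail if SVT fired spuriously with few updates---precisely the event your tail bound controls. Your remark that Theorem~\ref{thm:SVT}(1) does not by itself deliver the needed lower bound (because $\theta_j-\frac{8}{\varepsilon}\ln\frac{2T_j}{\beta_j}<0$) is a correct observation that the paper leaves implicit.
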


\begin{proof}
Privacy follows directly from Theorem \ref{thm:parallel}.  Below we prove the utility.
Consider the $j$-th segment, we have $s_j=[t_{j-1}+1, t_j]$ and $\theta_j=\frac{7}{\varepsilon}\ln\frac{2 T_j }{\beta_j}$.
We discuss on the number of updates received after time $t_{j-1}$ and before $T_j$, i.e.\ the size of $D([t_{j-1}+1, T_j])$.

If the number of updates $\abs{D([t_{j-1}+1, T_j])} \geq \theta_j+\frac{6}{\varepsilon}\ln\frac{2}{\beta_j}$, let $t^*\leq T_j$ be the timestamp of the $(\theta_j+\frac{6}{\varepsilon}\ln\frac{2}{\beta_j})$-th update. 
By Theorem~\ref{thm:SVT}, with probability $1-\beta_j$, SVT closes the segment at some $t_j\leq t^*\leq T_j$ and $\abs{D([t_{j-1}+1, t_j])} \geq \theta_j-\frac{6}{\varepsilon}\ln\frac{2t^*}{\beta_j}\geq \frac{1}{\varepsilon}\ln\frac{2T_j}{\beta_j}$.
In this case the segment contains  $\Theta(\frac{1}{\varepsilon}\log\frac{T_j}{\beta_j})$ updates, which also implies there can be at most $n_t$ such segments at time $t$ when $\beta$ is small.

Otherwise $\abs{D([t_{j-1}+1, T_j])} < \theta_j+\frac{6}{\varepsilon}\ln\frac{2}{\beta_j}$.
Since the segment is closed no later than $T_j$, we know $t_j\leq T_j$, so that $\abs{D([t_{j-1}+1,t_j])}\leq \abs{D([t_{j-1}+1, T_j])}=O(\frac{1}{\varepsilon}\log\frac{T_j}{\beta_j})$.
But we are not able to lower bound the number of updates within such a segment.
Yet whenever this happens, the next $T_{j+1} = T_j^2$, so this can happen at most $O(\log\log t)$ times up until time $t$.

In total, there can be at most $m_t=O(n_t + \log\log t)$ segments with high probability, where only $O(\log\log t)$ of them can be empty.
The final step is taking an union bound over an infinite sequence of fail probabilities.
We allocate $\beta_j=\frac{6}{\pi^2 j^2}\beta$ to get $\sum_{j=1}^\infty \beta_j=\beta$, 
Then with probability $1-\beta$ the number of items in segment $s_j=[t_{j-1}+1, t_j]$ is 
\[
O\left(\frac{1}{\varepsilon}\log \frac{T_j}{\beta_j}\right)=O\left(\frac{1}{\varepsilon}\log \left(t_j^2\cdot \frac{j^2}{\beta}\right)\right)=O\left(\frac{1}{\varepsilon}\log\frac{t}{\beta}\right)\,,
\]
% \[
% O\left(\frac{\log T_j/\beta_j}{\varepsilon}\right) = O\left(\frac{\log (s_j^2) + \log(j^2/\beta)}{\varepsilon}\right) = O\left(\frac{\log s_j + \log j + \log(1/\beta)}{\varepsilon}\right)=O\left(\frac{\log t + \log(1/\beta)}{\varepsilon}\right)\,,
% \]
for $t_j\leq t$ and $j\leq m_t=O(t)$.
\end{proof}

Feeding the segments to the mechanism of \cite{DBLP:conf/icalp/ChanSS10} yields the following result:
\begin{theorem}\label{thm:our-ins}
Suppose there is a static $(\varepsilon,\delta)$-DP mechanism for answering a class $\mathcal{Q}$ of linear queries with error function $\alpha^{(k)}(\varepsilon,\delta,\beta)$.  Then there is a dynamic $(\varepsilon,\delta)$-DP mechanism for an infinite insertion-only stream that answers every query in $\mathcal{Q}$ on $D_t$ for every $t$ with error $O\left(\alpha^{(\log m_t)}\left(\frac{\varepsilon}{\log m_t}, \frac{\delta}{\log m_t}, \beta\right)+\frac{1}{\varepsilon} \log\frac{t}{\beta}\right)$ with probability at least $1-\beta$, where $m_t=n_t +\log\log t$.
\end{theorem}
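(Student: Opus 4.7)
I would chain Algorithm~\ref{alg:partition} with the Chan et al.\ infinite-stream mechanism from Section~2.3. Split the privacy budget so partitioning uses $\varepsilon/2$ (pure DP) and Chan's mechanism uses $(\varepsilon/2,\delta)$. Whenever Algorithm~\ref{alg:partition} closes a segment $s_j$ at some real time $t_j$, collect the multiset $D(s_j)$ of items inserted during $s_j$ and feed it as a single super-insertion at super-time $j$ into one running instance of Chan's mechanism, which, as noted immediately before Section~2.4, can accept multiple insertions at one timestamp. To answer $f\in\mathcal{Q}$ at real time $t$, let $j^*(t)$ denote the index of the currently open segment and return Chan's estimate for $f$ on $\bigcup_{j<j^*(t)}D(s_j)$, leaving the insertions of the still-open $s_{j^*(t)}$ unaccounted for.

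\textbf{Privacy and utility.} Privacy follows by sequential composition (Theorem~1). The partitioning is $(\varepsilon/2)$-DP by the lemma. Conditioned on any fixed sequence of segment boundaries, a neighboring change to the input stream alters the content of at most one segment by exactly one item, so the induced super-insertion stream changes by exactly one item at one super-time; hence Chan's mechanism, restricted to any fixed partitioning, is $(\varepsilon/2,\delta)$-DP against neighboring input streams, and the overall release is $(\varepsilon,\delta)$-DP. For utility, give each component failure probability $\beta/2$. With probability $1-\beta/2$ the lemma gives (A) every segment up to time $t$ contains $O(\tfrac{1}{\varepsilon}\log\tfrac{t}{\beta})$ insertions, where the same bound extends to the still-open segment by the upper-tail halt guarantee in Theorem~\ref{thm:SVT}, which bounds the pending-segment contribution to $f(D_t)$ since $f(x)\in[0,1]$ and yields the additive $\tfrac{1}{\varepsilon}\log\tfrac{t}{\beta}$ term; and (B) $m_t=O(n_t+\log\log t)$ segments are produced by time $t$, so Chan's binary-tree construction has depth $O(\log m_t)$ and its estimate has error $\alpha^{(\log m_t)}\!\left(\tfrac{\varepsilon}{2\log m_t},\tfrac{\delta}{\log m_t},\tfrac{\beta}{2}\right)$. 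A union bound over the two failure events, together with absorption of the constants under the assumption from Section~1.1 that $\alpha$ is not exponential in its arguments, gives the stated bound.

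\textbf{Main obstacle.} The delicate step is the privacy argument, because the super-insertions fed to Chan's mechanism are derived from both the raw stream and the data-dependent partition boundaries, so one cannot naively invoke parallel composition. The clean resolution is to view the construction as an adaptive two-stage composition: the first stage is the $(\varepsilon/2)$-DP partitioning that emits $(t_1,t_2,\ldots)$, and the second stage reads the raw stream together with those fixed boundaries and runs Chan's mechanism on the induced super-insertion stream. For any fixed boundaries the second stage inherits $(\varepsilon/2,\delta)$-DP against neighboring input streams because each raw insertion maps to exactly one super-insertion, and Theorem~1 applied adaptively assembles the two stages into an $(\varepsilon,\delta)$-DP release.
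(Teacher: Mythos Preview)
Your proposal is correct and follows exactly the approach the paper takes: run the infinite private partitioning of Algorithm~\ref{alg:partition} and feed the resulting segments as super-timestamps into the Chan et al.\ infinite-stream mechanism of Section~2.3. The paper's own proof is essentially the single sentence ``Feeding the segments to the mechanism of~\cite{DBLP:conf/icalp/ChanSS10} yields the following result,'' so your write-up actually supplies more detail than the paper does, in particular the adaptive sequential-composition argument for privacy and the explicit treatment of the still-open segment via Theorem~\ref{thm:SVT}.
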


For continual counting, plug
$\alpha_{\operatorname{Lap}}^{(k)}(\varepsilon,\beta)=O\left(\frac{1}{\varepsilon}\left(\sqrt{k\log\frac{1}{\beta}} + \log\frac{1}{\beta}\right)\right)$
into Theorem~\ref{thm:our-ins}, we obtain the error
$
O\left(\frac{1}{\varepsilon}\left(
\log^{1.5}{m_t} \sqrt{\log{\frac{1}{\beta}}}
+ \log m_t\log\frac{1}{\beta}
+ \log{\frac{t}{\beta}}\right)\right)
$.
When $\beta$ is a constant, this simplifies to $O\left(\frac{\log^{1.5}{n_t}+\log t}{\varepsilon}\right)$.

\section{Fully-Dynamic Streams}
Our fully-dynamic algorithm consists of the following steps.  First, we run the private partitioning mechanism from Section \ref{sec:partition} to divide the update stream into segments.  This will, with high probability, produce $m_t=O(N_t+\log\log t)$ segments by time $t$ where each segment contains $O(\frac{1}{\varepsilon}\log\frac{t}{\beta})$ updates.  This effectively reduces the number of timestamps to $m_t$ while incurring an additive error of $O(\frac{1}{\varepsilon}\log\frac{t}{\beta})$.  For each timestamp, we process multiple updates in a batch. As in Figure~\ref{fig:batch}, we treat all updates arriving in segment $s_i=[t_{i-1}+1, t_i]$ as if they all arrive at time $t_i$. 

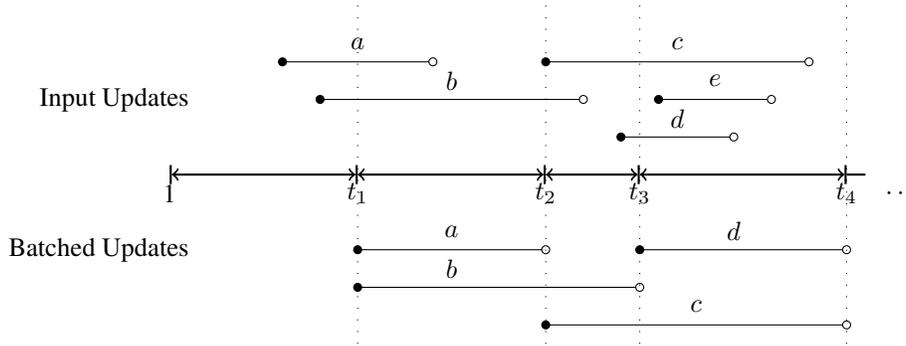
\begin{figure}[htbp]
\begin{tikzpicture}[scale=0.5]
    %\draw[help lines] (0,-5) grid (19,4); 
    
    \draw[|<->|][thick] (0,0) -- (5,0);
    \draw[<->|][thick] (5,0) -- (10,0);
    \draw[<->|][thick] (10,0) -- (12.5,0);
    \draw[<->|][thick] (12.5,0) -- (18,0); 
    \draw[-][thick] (18,0) -- (18.5,0); 
    
    \draw(0,-0.5) node[] {1};
    \draw(5,-0.5) node[] {$t_1$};
    \draw(10,-0.5) node[] {$t_2$};
    \draw(12.5,-0.5) node[] {$t_3$};
    \draw(18,-0.5) node[] {$t_4$};
    \draw(19.5,-0.5) node[] {$\cdots$};
    
    %%%%%%%%%%%%%%%%%%%%%%%
    
    \draw (1,2) node[label=left:Input Updates] {};
    
    \draw[-] (3,3) -- (7,3); 
    \draw[fill] (3,3) circle (3pt);
    \draw[fill=white] (7,3) circle (3pt);
    \draw(5,3.5) node[] {$a$};

    \draw[-] (4,2) -- (11,2); 
    \draw[fill] (4,2) circle (3pt);
    \draw[fill=white] (11,2) circle (3pt);
    \draw(7.5,2.5) node[] {$b$};

    \draw[-] (10,3) -- (17,3); 
    \draw[fill] (10,3) circle (3pt);
    \draw[fill=white] (17,3) circle (3pt);
    \draw(13.5,3.5) node[] {$c$};

    \draw[-] (13,2) -- (16,2); 
    \draw[fill] (13,2) circle (3pt);
    \draw[fill=white] (16,2) circle (3pt);
    \draw(14.5,2.5) node[] {$e$};
    
    \draw[-] (12,1) -- (15,1); 
    \draw[fill] (12,1) circle (3pt);
    \draw[fill=white] (15,1) circle (3pt);
    \draw(13.5,1.5) node[] {$d$};

    %%%%%%%%%%%%%%%%%%%%%%%%%%
    
    \draw (1,-2) node[label=left:Batched Updates] {};
    
    \draw[-] (5,-2) -- (10,-2); 
    \draw[fill] (5,-2) circle (3pt);
    \draw[fill=white] (10,-2) circle (3pt);
    \draw(7.5,-1.5) node[] {$a$};

    \draw[-] (5,-3) -- (12.5,-3); 
    \draw[fill] (5,-3) circle (3pt);
    \draw[fill=white] (12.5,-3) circle (3pt);
    \draw(7.5,-2.5) node[] {$b$};

    \draw[-] (10,-4) -- (18,-4); 
    \draw[fill] (10,-4) circle (3pt);
    \draw[fill=white] (18,-4) circle (3pt);
    \draw(14,-3.5) node[] {$c$};
    
    \draw[-] (12.5,-2) -- (18,-2); 
    \draw[fill] (12.5,-2) circle (3pt);
    \draw[fill=white] (18,-2) circle (3pt);
    \draw(15,-1.5) node[] {$d$};
    
    %%%%%%%%%%%%%%%%%%%%%%%%%%%%
    
    \draw[loosely dotted] (5,-4.5) -- (5,4.5);
    \draw[loosely dotted] (10,-4.5) -- (10,4.5);
    \draw[loosely dotted] (12.5,-4.5) -- (12.5,4.5);
    \draw[loosely dotted] (18,-4.5) -- (18,4.5);

\end{tikzpicture}
\centering
\caption{Multiple updates processed in a batch}\label{fig:batch}
\end{figure}

We can treat the update stream as a set of labeled intervals over the $m=m_t$ timestamps $t_1,t_2,\dots, t_m$.
An interval $[t_i,t_j)$ labeled with item $x\in \mathcal{X}$ represents an insertion-deletion pair $(x_{t_i}, 1),(x_{t_j},-1)$, where $x_{t_i}=x_{t_j}=x$.  We may assume that  $j>i$; if $i=j$ (e.g.,  interval $e$ in Figure~\ref{fig:batch}), the item is deleted within the same segment that it was inserted, this insertion-deletion pair is discarded.  Ignoring such pairs will only cause an additive error of at most $O(\frac{1}{\varepsilon}\log\frac{t}{\beta})$.  It is also possible that $t_j=\infty$, if the item is never deleted. Note that this interval representation of an update stream is not unique, e.g., when many copies of the same item are inserted and then deleted.  Any representation is fine; in fact, our algorithm does not depend this interval representation, only the analysis does.

Using the interval representation, a query on $D_t$ queries all intervals that are stabbed\footnote{Interval $[t_i, t_j)$ is stabbed by $t$ if $t_i\leq t < t_j$.} by $t$, so a natural idea is to use the \textit{interval tree}~\cite{DBLP:books/lib/BergCKO08} to organize these intervals.  In an interval tree, each $D_t$ is decomposed into a logarithmic number of subsets, each of which consists of one-sided intervals, which will allow us to use the insertion-only mechanism.  However, there are two technical difficulties in implementing such a plan.  First, the intervals are given in an online fashion, i.e., at time $t$, we only see the endpoints of the intervals prior to $t$. When we see the left endpoint of an interval, we do not know where in the interval tree to put this interval, yet, we need to immediately release privatized information about this interval.  Second, the interval tree on an infinite stream is also infinitely large, so we have to build it incrementally, while allocating the privacy budget appropriately.  We describe how to overcome these difficulties in Section \ref{sec:intervaltree}.
In Section \ref{sec:eachnode}, we introduce a DP mechanism  running at each node of the new tree structure to support querying at any time with respect to intervals stored in the tree.
The output of the whole mechanism is obtained by combining the individual mechanisms at tree nodes, which is summarized in Section  \ref{sec:together}.

\newcommand{\T}{\mathcal{T}}
\subsection{Online Interval Tree}
\label{sec:intervaltree}
We first build a binary tree $\T$ on the $m$ segments $s_1,\dots, s_m$ produced by the private partitioning mechanism. 
Since all the updates are batched, we can use segment $s_i=[t_{i-1}+1, t_i]$ and its right endpoint $t_i$ interchangeably.
Figure~\ref{fig:int-tree} shows an interval tree built on $8$ segments. It is clear that the $\T$ has $m$ nodes and at most $\log m$ height.  In the online setting, as the private partitioning mechanism produces more segments, $\T$ will also grow from left to right.  We order the nodes using an in-order traversal of $\T$: $v_1,v_2,\dots,$ and we build $v_i$ right after segment $s_i$ gets closed at time $t_i$ (see Figure \ref{fig:int-tree}). We also denote $t(v_i)=t_i$.

\begin{figure}[ht]
\centering
\begin{minipage}{.5\textwidth}
  \centering
  \begin{tikzpicture}[scale=0.45]
    \coordinate (v1) at (0,1);
    \coordinate (v2) at (2,3);
    \coordinate (v3) at (4,1);
    \coordinate (v4) at (6,5);
    \coordinate (v5) at (8,1);
    \coordinate (v6) at (10,3);
    \coordinate (v7) at (12,1);
    \coordinate (v8) at (14,7);
    
    \draw[-] (v1) -- (v2) -- (v3);
    \draw[-] (v5) -- (v6) -- (v7);    
    \draw[-] (v2) -- (v4) -- (v6);
    \draw[-] (v4) -- (v8);
    \draw[dashed] (v8) -- (15,6.75);    

    \draw[-|][thick] (-2,0) -- (0,0);
    \draw[-|][thick] (0,0) -- (2,0);
    \draw[-|][thick] (2,0) -- (4,0);
    \draw[-|][thick] (4,0) -- (6,0);
    \draw[-|][thick] (6,0) -- (8,0); 
    \draw[-|][thick] (8,0) -- (10,0); 
    \draw[-|][thick] (10,0) -- (12,0);
    \draw[-|][thick] (12,0) -- (14,0);
    \draw[dashed][thick] (14,0) -- (15,0);

    \draw(-2,-0.5) node[] {$1$};
    \draw(0,-0.5) node[] {$t_1$};
    \draw(2,-0.5) node[] {$t_2$};
    \draw(4,-0.5) node[] {$t_3$};
    \draw(6,-0.5) node[] {$t_4$};
    \draw(8,-0.5) node[] {$t_5$};
    \draw(10,-0.5) node[] {$t_6$};
    \draw(12,-0.5) node[] {$t_7$};
    \draw(14,-0.5) node[] {$t_8$};

    \filldraw[color=black,fill=white](v1) circle (15pt) node[label=above:{$\{b\}$}]{$v_1$};
    \filldraw[color=black,fill=white](v2) circle (15pt) node[]{$v_2$};    
    \filldraw[color=black,fill=white](v3) circle (15pt) node[]{$v_3$};    
    \filldraw[color=black,fill=white](v4) circle (15pt) node[label=above:{$\{a,c,d\}$}]{$v_4$};    
    \filldraw[color=black,fill=white](v5) circle (15pt) node[]{$v_5$};
    \filldraw[color=black,fill=white](v6) circle (15pt) node[label=above:{$\{e,f\}$}]{$v_6$};    
    \filldraw[color=black,fill=white](v7) circle (15pt) node[]{$v_7$};    
    \filldraw[color=black,fill=white](v8) circle (15pt) node[]{$v_8$};     
    
    %%%%%%%%%%%%%%%%%%%%%%%%%
        
    \draw[-] (0,-1.25) -- (14,-1.25); 
    \draw[fill] (0,-1.25) circle (3pt);
    \draw[fill=white] (14,-1.25) circle (3pt);
    \draw(7,-1) node[] {$a$};

    \draw[-] (0,-2.25) -- (2,-2.25); 
    \draw[fill] (0,-2.25) circle (3pt);
    \draw[fill=white] (2,-2.25) circle (3pt);
    \draw(1,-2) node[] {$b$};

    \draw[-] (4,-2.25) -- (12,-2.25); 
    \draw[fill] (4,-2.25) circle (3pt);
    \draw[fill=white] (12,-2.25) circle (3pt);
    \draw(8,-2) node[] {$c$};

    \draw[-] (6,-3.25) -- (8,-3.25); 
    \draw[fill] (6,-3.25) circle (3pt);
    \draw[fill=white] (8,-3.25) circle (3pt);
    \draw(7,-3) node[] {$d$};

    \draw[-] (10,-3.25) -- (14,-3.25); 
    \draw[fill] (10,-3.25) circle (3pt);
    \draw[fill=white] (14,-3.25) circle (3pt);
    \draw(12,-3) node[] {$f$};  

    \draw[-] (8,-4.25) -- (12,-4.25); 
    \draw[fill] (8,-4.25) circle (3pt);
    \draw[fill=white] (12,-4.25) circle (3pt);
    \draw(10,-4) node[] {$e$};    
\end{tikzpicture}
  \caption{A standard interval tree}
  \label{fig:int-tree}
\end{minipage}%
\begin{minipage}{.5\textwidth}
  \centering
  \begin{tikzpicture}[scale=0.45]
    \coordinate (v1) at (0,1);
    \coordinate (v2) at (2,3);
    \coordinate (v3) at (4,1);
    \coordinate (v4) at (6,5);
    \coordinate (v5) at (8,1);
    \coordinate (v6) at (10,3);
    \coordinate (v7) at (12,1);
    \coordinate (v8) at (14,7);
    
    \draw[-] (v1) -- (v2) -- (v3);
    \draw[-] (v5) -- (v6) -- (v7);    
    \draw[-] (v2) -- (v4) -- (v6);
    \draw[-] (v4) -- (v8);
    \draw[dashed] (v8) -- (15,6.75);    

    \draw[-|][thick] (-2,0) -- (0,0);
    \draw[-|][thick] (0,0) -- (2,0);
    \draw[-|][thick] (2,0) -- (4,0);
    \draw[-|][thick] (4,0) -- (6,0);
    \draw[-|][thick] (6,0) -- (8,0); 
    \draw[-|][thick] (8,0) -- (10,0); 
    \draw[-|][thick] (10,0) -- (12,0);
    \draw[-|][thick] (12,0) -- (14,0);
    \draw[dashed][thick] (14,0) -- (15,0);

    \draw(-2,-0.5) node[] {$1$};
    \draw(0,-0.5) node[] {$t_1$};
    \draw(2,-0.5) node[] {$t_2$};
    \draw(4,-0.5) node[] {$t_3$};
    \draw(6,-0.5) node[] {$t_4$};
    \draw(8,-0.5) node[] {$t_5$};
    \draw(10,-0.5) node[] {$t_6$};
    \draw(12,-0.5) node[] {$t_7$};
    \draw(14,-0.5) node[] {$t_8$};

    \filldraw[color=black,fill=white](v1) circle (15pt) node[label=above:{$\{a,b\}$}]{$v_1$};
    \filldraw[color=black,fill=white](v2) circle (15pt) node[label=above:{$\{a\}$}]{$v_2$};    
    \filldraw[color=black,fill=white](v3) circle (15pt) node[label=above:{$\{c\}$}]{$v_3$};    
    \filldraw[color=black,fill=white](v4) circle (15pt) node[label=above:{$\{a,c,d\}$}]{$v_4$};    
    \filldraw[color=black,fill=white](v5) circle (15pt) node[label=above:{$\{e\}$}]{$v_5$};
    \filldraw[color=black,fill=white](v6) circle (15pt) node[label=above:{$\{e,f\}$}]{$v_6$};    
    \filldraw[color=black,fill=white](v7) circle (15pt) node[]{$v_7$};    
    \filldraw[color=black,fill=white](v8) circle (15pt) node[]{$v_8$};     
    
    %%%%%%%%%%%%%%%%%%%%%%%%%
        
    \draw[-] (0,-1.25) -- (14,-1.25); 
    \draw[fill] (0,-1.25) circle (3pt);
    \draw[fill=white] (14,-1.25) circle (3pt);
    \draw(7,-1) node[] {$a$};

    \draw[-] (0,-2.25) -- (2,-2.25); 
    \draw[fill] (0,-2.25) circle (3pt);
    \draw[fill=white] (2,-2.25) circle (3pt);
    \draw(1,-2) node[] {$b$};

    \draw[-] (4,-2.25) -- (12,-2.25); 
    \draw[fill] (4,-2.25) circle (3pt);
    \draw[fill=white] (12,-2.25) circle (3pt);
    \draw(8,-2) node[] {$c$};

    \draw[-] (6,-3.25) -- (8,-3.25); 
    \draw[fill] (6,-3.25) circle (3pt);
    \draw[fill=white] (8,-3.25) circle (3pt);
    \draw(7,-3) node[] {$d$};

    \draw[-] (10,-3.25) -- (14,-3.25); 
    \draw[fill] (10,-3.25) circle (3pt);
    \draw[fill=white] (14,-3.25) circle (3pt);
    \draw(12,-3) node[] {$f$};  

    \draw[-] (8,-4.25) -- (12,-4.25); 
    \draw[fill] (8,-4.25) circle (3pt);
    \draw[fill=white] (12,-4.25) circle (3pt);
    \draw(10,-4) node[] {$e$};    
    
\end{tikzpicture}
  \caption{An online interval tree}
  \label{fig:our-tree}
\end{minipage}
\end{figure}

We ignore differential privacy for now, and just focus on how to answer a stabbing query using an interval tree, i.e., report all intervals stabbed by a query.
In a standard interval tree, an interval is stored at the highest node $v$ such that $t(v)$ stabs the interval.  
We use $D(v)$ to denote the set of labeled intervals stored at $v$.  
For example $D(v_4)=\{a,c,d\}$ in Figure~\ref{fig:int-tree}.  For a query at time $t_{q}$, we follow the root-to-leaf path to $v_q$ in $\T$.  For each node $v$ on the path whose $t(v)\leq t_q$, we find all intervals in $D(v)$ whose right endpoints are on the right side of $t_q$; for each node $v$ on the path whose $t(v)>t_q$, we find all intervals in $D(v)$ whose left endpoints are on the left side of (or equal to) $t_q$.  Standard analysis on the interval tree shows that these subsets form a disjoint union of all intervals stabbed by $t_q$.
For example when a query arrives at time $t_5$ in Figure~\ref{fig:int-tree}, we follow the path $(v_5,v_6,v_4,v_8,\dots)$.  $t(v_4)\leq t_5$, where $a,c\in D(v_4)$ have their right endpoints on the right side of $t_5$; $t(v_6)>t_5$, where $e\in D(v_6)$ has its left endpoint equal to $t_5$.
Thus we report $D_{t_5}=\{a,c,e\}$, which are the elements present in the dataset at time $t_5$.

In an online setting, however, we do not know which node is the highest to put an interval in, since we do not know the deletion time when an item is inserted.  The idea is to put a copy of the interval into every node where the interval \textit{might} be placed into.
We use Figure~\ref{fig:our-tree} to illustrate. Interval $a$ will placed into $v_1,v_2,v_4$, as well as ancestors of $v_4$, while interval $e$ will be placed into $v_5,v_6,v_8$ (and its ancestors).  As there are infinitely many nodes where an interval might be placed into, we do not actually put an interval in all those nodes, but will do so lazily.  Thus, the rule is that an interval $[t_i, t_j)$ will be stored at $v_i$ and \textit{each} ancestor $\tilde{v}$ of $v_i$ where $t(\tilde{v})$ stabs $[t_i, t_j)$.
In Figure~\ref{fig:our-tree}, both $a$ and $b$ are inserted in segment $s_1$, so $D(v_1)$ stores both. The ancestors of $v_1$ are $v_2,v_4,v_8$ (and possibly more). $v_2$ stores $a$ but not $b$, since $t(v_2)$ only stabs interval $a$.
Intuitively, by time $t(v_2)$, $b$ is already deleted, so there is no need to store $b$ at $v_2$.  On the other hand, $a$ needs to be stored in both $v_1$ and $v_2$ (in the standard interval tree, it is only stored at $v_4$), because by time $t(v_1)$ or $t(v_2)$, we still do not know its deletion time.
Note that the extra copies of an interval are only stored in the ancestors of the node corresponding to its insertion time, thus there is at most one extra copy at each level. For example, although interval $a$ is stabbed by $t(v_3)$, it is not stored there since $v_3$ is not an ancestor of $v_1$.

\subsubsection{Building an Online Interval Tree}

This online interval tree can be incrementally constructed easily.  After the batch of updates in segment $s_i$ have arrived, we can construct $D(v_i)$.
For nodes in the left-most path ($v_1,v_2,v_4,v_8,\dots$), $D(v_i)=D_{t_i}$ simply consists of all items currently in the dataset, since the node $v_i$ is always an ancestor of $v_j$ for $j<i$.
For other nodes $v_i$, there exists at least one ancestor on its left, and let $\tilde{v}$ be the lowest such ancestor.
We include into $D(v_i)$ all labeled intervals in the current $D_{t_i}$ that is inserted after $t(\tilde{v})$.
For example when constructing $D(v_5)$ in Figure \ref{fig:our-tree}, we find this left-ancestor to be $\tilde{v}=v_4$, so $D(v_5)=\{e\}$ will only include $e$ from the current dataset $D_5=\{a,c,e\}$, which is inserted after $t_4$. Intuitively, $a$ and $c$ have already been covered by $v_4$.

Note that when $D(v_i)$ is first constructed, we do not have the deletion times of the items in $D(v_i)$, which will be added when these items are actually deleted later.  For example, in Figure \ref{fig:our-tree}, $D(v_1)=\{a, b\}$ is constructed after segments $s_1$ but neither item is associated with a deletion time.  After segment $s_2$, we add the deletion time of $b$, augmenting $D(v_1)$ to $D(v_1) = \{a, (b,t_2)\}$; after segment $s_8$, $D(v_1)$ becomes $\{(a,t_8), (b,t_2)\}$.  Note that there is no need to associate the left endpoints (i.e., insertion times) to the items as in the standard interval tree, and we will see why below.  

% as shown in Algorithm~\ref{algo:init}.
% Note that Algorithm~\ref{algo:init} only constructs the $D(v)$ for each $v$ by time $t(v)$, but does not release any data; we will show how to privately release each $D(v)$ in Section \ref{sec:eachnode}. 
% Whenever the private partitioning mechanism outputs a split point $s_j$, we will construct $D(v_j)$ in the tree.
% $v_j$ fetches $D_{s_j}$ at time $s_j$, which is a superset of $D(v_j)$.
% Then it chooses to include those items whose insertion timestamp is in its subtree.

% \begin{algorithm}[ht]
% \KwIn{An update sequence $(x_i,y_i)\in \mathcal{U} \times \{-1,1,\bot\}$ }
% \KwOut{$D(v_j)$ for each $v_j$ }
% $D_t\gets \varnothing$\;
% Run the private partitioning mechanism on the update sequence\;
% \While{Partitioning outputs a split point $s_j$}{
% $D(v_j)\gets \varnothing$\;
% \For{all updates $(x_i, y_i)$ in $[s_{j-1}+1, s_j]$}{
% \uIf{$y_i=1$}{
% $D_t = D_t\cup (x_i, s_j)$\;
% }
% \ElseIf{$y_i=-1$}{
% Remove an arbitrary entry of $(x_i,\cdot)$ from $D_t$\;
% }
% }

% \For{all $(x_i, a_i)\in D_t$} {
% \If{$a_i$ is in the subtree rooted at $v_j$}{
% $D(v_j)\gets D(v_j)\cup \{x_i\}$\;
% }
% }
% Store $D(v_j)$ at $v_j$\;
% }

% \caption{Incrementally construct the online interval tree}\label{algo:init}
% \end{algorithm}

\subsubsection{Querying an Online Interval Tree}

Now we show how to answer a stabbing query using the online interval tree.  Since the online interval tree includes multiple copies of an item, the standard interval tree query algorithm will not work, as it may report duplicates.  For the stabbing problem itself, duplicates are not an issue as they can be easily removed if they have been reported already.  However, for answering linear queries, we actually need to cover all  stabbed intervals by a disjoint union of subsets.  To achieve it, we modify the stabbing query process as follows.  Suppose we ask a query at time $t_q$.  We first follow the root-to-leaf path to $v_q$ in $\T$.  For each node $\tilde{v}$ on the path whose $t(\tilde{v})\leq t_q$, we report all the items in $D(\tilde{v})$ whose deletion time is on the right side of $t_q$. Again consider a query at time $t_5$ in Figure~\ref{fig:our-tree}, we will only visit $v_4$ and $v_5$ who report $\{a,c\}$ and $\{e\}$ respectively.

Note that unlike in the standard interval tree, we do not query those nodes on the right side of $t_q$ (e.g.~$v_6$).  It turns out that the items stored in those nodes are exactly compensated by the extra copies of items stored in the nodes on the left side of $t_q$. The following lemma formalizes this guarantee.

\begin{lemma}
The query procedure described above reports each stabbed interval exactly once.
\end{lemma}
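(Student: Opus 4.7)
The plan is to fix a single stabbed interval $I=[t_i,t_j)$ (so $t_i\le t_q<t_j$) and show that exactly one visited node reports $I$. First I intersect the storage set with the visit set. By construction, $I$ lives at $v_i$ and at each ancestor $\tilde v$ of $v_i$ with $t(\tilde v)\in[t_i,t_j)$; by the query rule, the visited nodes are $v_q$ together with its ancestors having $t(\tilde v)\le t_q$, and the deletion-time filter passes for $I$ since $t_j>t_q$. Hence $I$ is reported by $\tilde v$ iff $\tilde v$ is a common ancestor-or-equal of $v_i$ and $v_q$ and $t_i\le t(\tilde v)\le t_q$ (the constraint $t(\tilde v)<t_j$ is automatic from $t_q<t_j$).

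The main tool is the standard in-order invariant in $\mathcal{T}$: for any ancestor $\tilde v$ of a node $w$, one has $t(\tilde v)>t(w)$ iff $w$ lies in the left subtree of $\tilde v$, and $t(\tilde v)<t(w)$ iff $w$ lies in the right subtree. Let $u=\operatorname{LCA}(v_i,v_q)$; the common ancestors-or-equal of $v_i$ and $v_q$ are precisely $u$ and its strict ancestors. I will verify that $t(u)\in[t_i,t_q]$, so $u$ itself reports $I$, and then rule out every strict ancestor of $u$. For $u$, a short case split on whether $u=v_i$, $u=v_q$, or $u$ is strictly above both, combined with the in-order invariant and $t_i\le t_q$, immediately yields $t_i\le t(u)\le t_q$. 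For a strict ancestor $\tilde v$ of $u$, both $v_i$ and $v_q$ lie in the same subtree of $\tilde v$, so either both keys exceed $t(\tilde v)$ (forcing $t(\tilde v)<t_i$) or both fall below it (forcing $t(\tilde v)>t_q$), and in either case $\tilde v$ fails one of the two bounds.

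The hardest step is not a calculation but recognizing the right LCA-based reformulation of ``a node both stores $I$ and is visited''; once that is in hand, the two bounds on $t(u)$ and the exclusion of every strict ancestor of $u$ are immediate from the in-order invariant. Note that the argument does not use the specific contents of the sets $D(\tilde v)$, only the stabbing structure; this is exactly why the online storage rule, which puts extra copies of intervals only along the left-spine of the tree, still produces a disjoint cover of the stabbed intervals.
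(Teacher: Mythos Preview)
Your argument is correct and is essentially the paper's proof: the paper's ``minimum subtree containing both $v_i$ and $v_q$'' is exactly the subtree rooted at your $u=\operatorname{LCA}(v_i,v_q)$, and the exclusion of all other candidate nodes via the in-order invariant is the same reasoning. The only difference is that the paper also dispatches the two non-stabbed cases ($t_j\le t_q$ via the deletion-time filter, and $t_i>t_q$ via $t(\tilde v)\le t_q<t_i$) to confirm such intervals are never reported; you omit this by fixing a stabbed interval from the outset, but it is needed for the disjoint-union application and is immediate from your own characterization of the reporting condition.
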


\begin{proof}
Given a query at time $t_q$, consider any interval $[t_i, t_j)$.
If $t_q$ does not stab the interval, it should not be reported. This happens when: (1) the item has been deleted at query time ($t_j\leq t_q$). As we only report an item whose deletion time is on the right side of $t_q$, the interval is filtered out; (2) the item has not arrived by query time ($t_i > t_q$). As we only visit nodes where $t(\tilde{v})\leq t_q$, it follows that $t(\tilde{v})\leq t_q <t_i$. By definition, $t(\tilde{v})$ does not stab $[t_i, t_j)$, thus does not store the inverval.

 \begin{figure}[ht]
\begin{tikzpicture}[scale=0.4]
    \coordinate (v1) at (0,1);
    \coordinate (v2) at (2,3);
    \coordinate (v3) at (4,1);
    \coordinate (v4) at (6,5);
    \coordinate (v5) at (8,1);
    \coordinate (v6) at (10,3);
    \coordinate (v7) at (12,1);
    \coordinate (v8) at (14,7);
    \coordinate (vx) at (-2,8);    
    
    \draw[-] (v1) -- (v2) -- (v3);
    \draw[-] (v5) -- (v6) -- (v7);    
    \draw[-] (v2) -- (v4) -- (v6);
    \draw[-] (v4) -- (v8) -- (vx);
    
    \draw[-|][thick] (-3,0) -- (-2,0);        
    \draw[-|][thick] (-2,0) -- (0,0);    
    \draw[-|][thick] (0,0) -- (2,0);
    \draw[-|][thick] (2,0) -- (4,0);
    \draw[-|][thick] (4,0) -- (6,0);
    \draw[-|][thick] (6,0) -- (8,0); 
    \draw[-|][thick] (8,0) -- (10,0); 
    \draw[-|][thick] (10,0) -- (12,0);
    \draw[-|][thick] (12,0) -- (14,0);
    \draw[-][thick] (14,0) -- (15,0);

    \draw(0,-0.5) node[] {};
    \draw(2,-0.5) node[] {$t_i$};
    \draw(4,-0.5) node[] {};
    \draw(6,-0.5) node[] {};
    \draw(8,-0.5) node[] {};
    \draw(10,-0.5) node[] {};
    \draw(12,-0.5) node[] {$t_q$};
    \draw(14,-0.5) node[] {};

    \filldraw[color=black,fill=white](v1) circle (15pt) node[]{};
    \filldraw[color=black,fill=lightgray](v2) circle (15pt) node[]{$v_i$}; 
    \filldraw[color=black,fill=white](v3) circle (15pt) node[]{}; 
    \filldraw[ultra thick,color=black,fill=lightgray](v4) circle (15pt) node[]{$v$}; 
    \filldraw[color=black,fill=white](v5) circle (15pt) node[]{};
    \filldraw[ultra thick,color=black,fill=white](v6) circle (15pt) node[]{}; 
    \filldraw[ultra thick,color=black,fill=white](v7) circle (15pt) node[]{$v_q$}; 
    \filldraw[color=black,fill=lightgray](v8) circle (15pt) node[]{}; 
    \filldraw[ultra thick, color=black,fill=white](vx) circle (15pt) node[]{};     

    \filldraw[color=black,fill=lightgray] (-2,-2.5) circle (15pt) node[label=right:{\;Nodes storing $[t_i, t_j)$}]{};  
    \filldraw[ultra thick,color=black,fill=white] (8,-2.5) circle (15pt) node[label=right:{\;Nodes queried by $t_q$}]{};         
    %%%%%%%%%%%%%%%%%%%%%%%%%
        
    \draw[-] (2,-1.25) -- (15,-1.25); 
    \draw[fill] (2,-1.25) circle (3pt);
    
\end{tikzpicture}
\centering
\caption{Query procedure for a stabbing query}\label{fig:proof}
\end{figure}
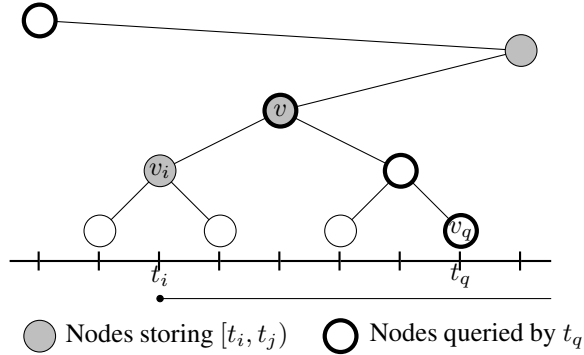

The final case is when $t_i\leq t_q < t_j$, and the interval should be reported by exactly one node. This is shown in Figure~\ref{fig:proof}.
For the trivial case that $q=i$, the newly constructed node $v_i$ is the only node reporting this interval.
Otherwise, consider the minimum subtree containing both $v_i$ and $v_q$. Assume it is rooted at $v$.
We must have $v_i$ in its left subtree and $v_q$ in its right subtree by the minimum property, with the only exception that one of them can be $v$ itself, i.e., $t_i\leq t(v) \leq t_q$.
We can argue that $v$ is the only node that reports the interval:
any node $v_i'\neq v$ that stores $[t_i,t_j)$ is either in the left subtree of $v$ (because it is an ancestor of $v_i$); or an ancestor of $v$ that is on the right side of $v$ (because $t(v_i')\geq t_i$);
any node $v_q'\neq v$ queried by $t_q$ is either in the right subtree of $v$ (because it is an ancestor of $v_q$); or an ancestor of $v$ that is on the left side of $v$ (because $t(v_q')\leq t_q$).
Thus the only node that can possibly report this interval is $v$.
Since $v$ is  queried by $v_q$, and $t_j>t_q$, this stabbing interval is reported exactly once by $v$.
\end{proof}

\subsection{Mechanism at Each Node}
\label{sec:eachnode}
We have shown that the online interval tree can be incrementally constructed, such that at any time $t_j$, we can obtain the current dataset $D_{t_j}$ by a disjoint union of $O(\log j)$ subsets, each from $v_j$ or a left-side ancestor of $v_j$ in the interval tree.  Consider each queried node $v_i$ ($i\leq j$), and let $D_{t}(v_i)$ be the set of items that node $v_i$ stores at time $t$.
This implies any linear query $f(D_t)$ can be answered by computing the sum $\sum_{i}f(D_t(v_i))$ over queried tree nodes.
Answering a query $f$ on items stored by $v=v_i$ at time $t=t_j$ is a deletion-only problem.
When $D(v)=D_{t_i}(v_i)$ is first constructed at time $t_i$, it consists of items in the dataset at time $t_i$.
Then, items in $D(v)$ get deleted as time goes by.  

A simple solution for the deletion-only problem is to first release $\mathcal{M}(D(v))$ when $v_i$ is initialized, and then run an insertion-only mechanism over the deletions.  To answer a query $f$ at time $t$, we obtain $f(D_{t}^{\operatorname{del}}(v))$ from the insertion-only mechanism, where $D_{t}^{\operatorname{del}}(v)$ denotes the set of items in $D(v)$ that have been deleted by time $t$.  Then, we use $f(D_t(v)) = f(D(v)) - f(D_t^{\operatorname{del}}(v))$ as the answer, where $f(D(v))$ can be queried from $\mathcal{M}(D(v))$.  However, the error of this simple solution will be $\alpha(\abs{D(v)})$ from $\mathcal{M}(D(v))$.  On the other hand, our target error bound is $\alpha(n_t)$, but $\abs{D(v)}$ can be arbitrarily larger than $n_t$.   
To fix the problem, we ensure that no more than $\frac{\abs{D(v)}}{2}$ items should be deleted so that $|D(v)|=O(|D_t(v)|)$.  When half the items have been deleted from $D(v)$, we restart the process with a new $D(v)$ that consist of the remaining items. 

There are still a few privacy-related issues with the above idea.  First, we cannot restart when exactly half the items have been deleted, which would violate DP.  Instead, we run a basic counting mechanism over the deletions of $D(v)$ to approximately keep track of the number of deletions; we show that such an approximation will only contribute an additive polylogarithmic error.  Second, since in the online interval tree, each item has copies in multiple nodes of $\T$, and in each node, we restart the process above multiple times, we need to allocate the privacy budget carefully using sequential composition.  But the privacy degradation is only polylogarithmic since both numbers are logarithmic.  Finally, the first logarithm, i.e., the number of copies of each item, is the height of the tree $\log m_t$ where $m_t=N_t + \log\log t$, but $m_t$ is not known in advance.  Thus, we allocate a privacy budget proportional to  $1 / \ell^2$ to a node at level $\ell$ for $\ell=1,2,\dots,\log m_t$, so that the total privacy is bounded for any $m_t$, while incurring another logarithmic-factor degradation.  Note that we could have used a tighter series $1 / \ell^{1+\eta}$ for any constant $\eta>0$ but we did not try to optimize the polylogarithmic factors for the fully-dynamic algorithm.  Algorithm \ref{algo:node} details the steps we run at each node $v$ in the online interval tree.
We present in Lemma~\ref{lemma:node} its accuracy guarantee, assuming each node is allocated with $(\varepsilon,\delta)$-DP.

\begin{lemma}~\label{lemma:node}
For each node $v$, Algorithm~\ref{algo:node} is $(\varepsilon,\delta)$-DP. Suppose there is a static $(\varepsilon,\delta)$-DP mechanism for answering a class $\mathcal{Q}$ of linear queries with error function $\alpha^{(k)}(\varepsilon,\delta,\beta,\abs{D})$. Then Algorithm~\ref{algo:node} answers $\mathcal{Q}(D_{t}(v))$ for any node $v$ at time $t=t_j$ with error $O\left(\alpha^{(\log j)}\left(\tilde{O}(\varepsilon),\tilde{O}(\delta), \beta, n_{t}(v)+\tilde{O}(1)\right)\right)$ with probability at least $1-\beta$.
\end{lemma}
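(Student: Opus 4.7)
I split the argument into a privacy part, a utility part, and a discussion of the main obstacle.

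\emph{Privacy.} Algorithm~\ref{algo:node} divides the lifetime of node $v$ into a sequence of \emph{epochs}; epoch $\ell$ starts with a fresh snapshot $D^{(\ell)}(v)$ (the items still alive at that restart) and then runs three sub-mechanisms on the substream observed during the epoch: (i)~the static mechanism $\mathcal{M}$ applied once to $D^{(\ell)}(v)$; (ii)~the infinite insertion-only mechanism of Theorem~\ref{thm:our-ins} applied to the subsequent deletion substream; and (iii)~a Laplace-noised running counter of the same deletions whose crossing of the threshold $|D^{(\ell)}(v)|/2$ terminates the current epoch and opens a new one. Since each restart at least halves the effective data size (up to polylogarithmic slack from (iii) handled below), the total number of epochs is deterministically bounded by $O(\log|D(v)|)=O(\log j)$. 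Allocating $\varepsilon/O(\log j)$ and $\delta/O(\log j)$ to each epoch and then dividing this share equally among (i)--(iii), standard sequential composition yields $(\varepsilon,\delta)$-DP for the whole node.

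\emph{Utility.} Fix a query $f\in\mathcal{Q}$ at time $t=t_j$ and let $\ell^{\star}$ be the epoch containing $t$. A union bound with a $1/\ell^{2}$ failure-probability allocation over epochs ensures that, simultaneously in every epoch, the Laplace counter in (iii) misses the true deletion count by only $\tilde{O}(\frac{1}{\varepsilon}\log\frac{t}{\beta})$, so every restart fires while the true deletion count sits within $\tilde{O}(1)$ of $|D^{(\ell)}(v)|/2$, forcing $|D^{(\ell^{\star})}(v)|\le 2|D_t(v)|+\tilde{O}(1)=O(n_t(v)+\tilde{O}(1))$ and bounding the number of deletions already processed in $\ell^{\star}$ by the same quantity. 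The released answer is $\mathcal{M}_f(D^{(\ell^{\star})}(v))$ minus the Theorem~\ref{thm:our-ins} estimate of the deletions in $\ell^{\star}$; the snapshot contributes error $\alpha(\tilde{O}(\varepsilon),\tilde{O}(\delta),\beta,n_t(v)+\tilde{O}(1))$, and the deletion tracker contributes $O\bigl(\alpha^{(\log m)}(\tilde{O}(\varepsilon),\tilde{O}(\delta),\beta)+\frac{1}{\varepsilon}\log\frac{t}{\beta}\bigr)$ with $m=O(n_t(v)+\log\log t)$, so $\log m=O(\log j)$. The additive $\frac{1}{\varepsilon}\log(t/\beta)$ term is absorbed into the $n_t(v)+\tilde{O}(1)$ data-size slot of $\alpha^{(\log j)}$ (since we assumed at most polynomial dependence on $|D|$), giving the claimed error.

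\emph{Main obstacle.} The delicate point is that the restart times, and consequently both the identity of $\ell^{\star}$ and the composition of $D^{(\ell^{\star})}(v)$, are data-dependent random variables. This is resolved because each of (i)--(iii) is DP on the \emph{entire} substream it observes, so sequential composition across epochs is unaffected by where the data-dependent restarts happen to land; once the simultaneous-accuracy event for (iii) is established via the $1/\ell^{2}$ union bound, the remaining accounting is routine composition arithmetic.
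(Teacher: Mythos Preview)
Your privacy argument has a genuine gap. You assert that the number of epochs is ``deterministically bounded by $O(\log|D(v)|)=O(\log j)$'', but this is false: a restart is triggered by the \emph{noisy} comparison $\tilde{n}_t^{\mathrm{del}} > \tilde{n}/2 + 2\gamma_{t,r}$, so with positive probability the restart fires after arbitrarily few actual deletions, the new $D(v)$ barely shrinks, and the epoch count can exceed any fixed bound. The halving you invoke holds only with high probability, and privacy must hold unconditionally --- you cannot condition on the good event for (iii) when doing the privacy accounting. Even granting the bound, $|D(v)|$ is itself sensitive, so it cannot be used to set per-epoch budgets, and the quantity $j$ you equate it to is the segment index at query time, which is not available when the node is initialized and its budgets must be fixed.

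The paper avoids all of this by allocating $(\varepsilon_r,\delta_r)\propto 1/r^2$ to epoch $r$, so that $\sum_{r\ge 1}(\varepsilon_r,\delta_r)\le(\varepsilon,\delta)$ holds no matter how many epochs occur; the bound $r=O(\log N_t)$ is then proved only in the \emph{accuracy} analysis (where conditioning on the high-probability event is legitimate) and fed back into the error via $\varepsilon_r=\Theta(\varepsilon/r^2)=\tilde{O}(\varepsilon)$. Your utility sketch is otherwise in the right spirit, though two details differ from the paper: there are four sub-mechanisms per epoch (a one-shot Laplace on $|D(v)|$ in addition to your (i)--(iii)), and the additive $\frac{1}{\varepsilon}\log\frac{t}{\beta}$ term from Theorem~\ref{thm:our-ins} does not arise inside the node because the deletion substream is already segmented.
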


\begin{proof}
\textbf{Privacy.}
Algorithm~\ref{algo:node} uses four black-box mechanisms: When $v$ is initialized, the Laplace mechanism is used to protect its size $|D(v)|$, and a static mechanism $\mathcal{M}_{\mathcal{Q}}$ is used to answer $\mathcal{Q}(D(v))$; then two insertion-only mechanisms $\mathcal{M}_{\mathcal{Q}-\operatorname{Ins}}$ and $\mathcal{M}_{\operatorname{Lap-Ins}}$ are used to compute $\mathcal{Q}$ and the basic counting query respectively over the deletions.
In any round $r$, the composition of these four mechanisms is $(4\varepsilon_r, 3\delta_r)=\left(\frac{6\varepsilon}{\pi^2r^2}, \frac{6\delta}{\pi^2r^2}\right)$-DP.
As we restart these four mechanisms, they are sequentially composed, which guarantees the whole mechanism at node $v$ is $\left(\sum_{r=1}^\infty \frac{6\varepsilon}{\pi^2r^2}, \sum_{r=1}^\infty \frac{6\delta}{\pi^2r^2}\right)=(\varepsilon,\delta)$-DP, independent of the number of restarts.

\medskip
\textbf{Accuracy.}
We first bound the number of restarts $r$ as follows.
Note that $\gamma_{t,r}$ is the error bound of the Laplace mechanism at time $t$ in round $r$.
When a restart happens at time $t=t_j$, we have $\tilde{n}^{\operatorname{del}}_t > \tilde{n}/2 + 2\gamma_{t,r}$.
With probability $1-2\beta_r$, both $\tilde{n}^{\operatorname{del}}_t$ and $\tilde{n}$ have error at most $\gamma_{t,r}$. Conditioned on this happening, $n^{\operatorname{del}}_t>n(v)/2$: at least half of the remaining items have been deleted since the last restart.
Since $v$ was initialized with $n(v)\leq N_t$ items, this can happen at most $r=O(\log n(v))=O(\log N_t)$ times before there are only $n_t(v)\leq \gamma_{t,r}$ items left. Afterwards $\tilde{n}<2\gamma_{t,r}$ is true and the algorithm halts by answering $0$, which has error at most $n_t(v)=O(\gamma_{t,r})$.
Therefore, with probability $1-\sum_{r=1}^\infty 2\beta_r = 1-\beta/3$, there can only be $O(\log N_t)$ rounds.
We condition on this event in the following.

We next bound the error of $\mathcal{Q}(D_{t}(v))$ in any round $r=O(\log N_t)$.
If the algorithm does restart at time $t$, the latest dataset is computed and a fresh static mechanism $\mathcal{M}_\mathcal{Q}$ with privacy budget $(\varepsilon_r,\delta_r)$ is used to answers $\mathcal{Q}(D(v))$, whose error is $\alpha(\varepsilon_r,\delta_r, \beta/2, n_t(v))=O(\alpha(\varepsilon_r, \delta_r, \beta, n_t(v)))$ with probability $1-\beta/2$.
Otherwise (line~\ref{line:few}), with probability $1-2\beta_r$ we have the actual number of deletions
$n^{\operatorname{del}}_t\leq n(v)/2 + 4\gamma_{t,r}$, where $n(v)$ is the number of remaining items in the current round from the last restart.
Namely the current data size is at least $n_t(v)=n(v)-n^{\operatorname{del}}_t\geq n(v)/2 - 4\gamma_{t,r}$.
Further conditioned on this, we have with probability $1-\beta/6$, the error for $\mathcal{Q}(D(v))$ obtained from $\mathcal{M}(Q)$ is $\alpha(\varepsilon_r, \delta_r, \frac{\beta}{6}, n(v))$.
Also with probability $1-\beta/6$, the error of $\mathcal{Q}(D_t^{\operatorname{del}}(v))$ obtained from $\mathcal{M}_{\mathcal{Q}-\operatorname{Ins}}$ is 
$O\left(\alpha^{(\log j)}\left(\frac{\varepsilon_r}{\log j}, \frac{\delta_r}{\log j} , \frac{\beta}{6}, n^{\operatorname{del}}_t\right)\right)$ using Theorem~\ref{thm:our-ins}\footnote{Note that the additive $\frac{1}{\varepsilon}\log\frac{t}{\beta}$ term in Theorem~\ref{thm:our-ins} does not apply here, since we do not need to invoke another infinite partitioning.}.
Both terms are covered by $O\left(\alpha^{(\log j)}\left(\frac{\varepsilon_r}{\log j}, \frac{\delta_r}{\log j} , \beta, n_t(v)+\gamma_{t,r}\right)\right)$, as $n^{\operatorname{del}}_t+n(v)\leq 3n(v)/2+4\gamma_{t,r}\leq 3n_t(v)+16\gamma_{t,r}=O(n_t(v)+\gamma_{t,r})$. So at any time $t=t_j$ and in any round $r$, the error of answering $\mathcal{Q}(D_t(v))$ can be bounded by $O\left(\alpha^{(\log j)}\left(\frac{\varepsilon_r}{\log j}, \frac{\delta_r}{\log j} , \beta, n_t(v)+\gamma_{t,r}\right)\right)$ with probability $1-2\beta_r-2\beta/6\geq 1-2\beta/3$.

Finally for $t=t_j$, take $\gamma_{t,r}=O(\frac{1}{\varepsilon}\log^{1.5} j\log\frac{r^2}{\beta})$ from Theorem~\ref{thm:our-ins}, $(\varepsilon_r,\delta_r)= (\Theta(\frac{\varepsilon}{r^2}), \Theta(\frac{\delta}{r^2}))$, and $r=O(\log N_t)$, the error bound for $\mathcal{Q}(D_t(v))$ is (with probability $1-\beta$)
\[
O\left(\alpha^{(\log j)}\left(\frac{\varepsilon}{\log^2 N_t\log j}, \frac{\delta}{\log^2 N_t \log j} , \beta, n_{t}(v)+\frac{1}{\varepsilon}\log^{1.5} j\log\frac{\log N_t}{\beta}\right)\right)\,.
\]
\end{proof}

\subsection{Putting it Together}
\label{sec:together}
Lemma~\ref{lemma:node} assumes each node is under $(\varepsilon,\delta)$-DP, which we cannot afford since we have an outer tree of depth $\log m_t$.
Instead we allocate $(\varepsilon(v),\delta(v))=\left(\frac{6\varepsilon}{\pi^2\ell^{2}}, \frac{6\delta}{\pi^2\ell^{2}}\right)$ to a node $v$ at level $\ell$ in the outer tree, so that the composed mechanism is still $(\varepsilon,\delta)$-DP.
The final sum consists of a disjoint union of at most $\log m_t$ nodes, where the error of each node is given by Lemma~\ref{lemma:node} under $(\varepsilon(v),\delta(v))$-DP and $j\leq m_t$.
We have the total error at any time $t$ is 
\[
O\left(\alpha^{(\log^2 m_t)}\left(\frac{\varepsilon}{\log^2 N_t \log^{3} m_t},\frac{\delta}{\log^2 N_t \log^{3} m_t}, \beta, n_t+\frac{1}{\varepsilon}\log^{1.5}m_t\log\frac{\log m_t}{\beta}\right)\right)
\]
Finally, conditioned on $m_t=O(N_t+\log\log t)$ (which always holds for small $\beta$), and include the additive error from segmentation, we get the following theorem.
\begin{theorem}~\label{thm:dynamic}
Suppose there is a static $(\varepsilon,\delta)$-DP mechanism for answering a class $\mathcal{Q}$ of linear queries with error function $\alpha^{(k)}$.
There exists a mechanism for fully-dynamic streams such that at time $t$ it has error
\[
O\left(\alpha^{(\log^2 m_t)}\left(\frac{\varepsilon}{\log^{5}m_t},\frac{\delta}{\log^{5}m_t}, \beta, n_t+\frac{1}{\varepsilon}\log^{1.5} m_t\log\frac{\log m_t}{\beta}\right)+\frac{1}{\varepsilon}\log \frac{t}{\beta}\right),
\]
with probability $1-\beta$, where $m_t=N_t+\log\log t$, where $N_t$ is the number of updates and $n_t$ is the size of the dataset at time $t$. 
\end{theorem}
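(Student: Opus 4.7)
The plan is to combine the three ingredients already assembled: the infinite private partitioning mechanism of Section~\ref{sec:partition}, the online interval tree of Section~\ref{sec:intervaltree}, and the per-node deletion-only mechanism whose guarantees are captured in Lemma~\ref{lemma:node}. First I would run the private partitioning mechanism with a constant fraction of the total privacy budget on the update stream, obtaining segments that, with high probability, number at most $m_t = O(N_t + \log\log t)$ by time $t$ and each contain $O(\frac{1}{\varepsilon}\log\frac{t}{\beta})$ updates. Treating each segment as a single batched timestamp and discarding insertion-deletion pairs that occur within the same segment contributes exactly the additive $O(\frac{1}{\varepsilon}\log\frac{t}{\beta})$ term in the theorem.

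Second, I would build the online interval tree on these batched timestamps and instantiate Algorithm~\ref{algo:node} at every node. Since the tree's height is bounded by $\log m_t$ but $m_t$ is not known a priori, I allocate budget $(\varepsilon(v), \delta(v)) = (\frac{6\varepsilon}{\pi^2 \ell^2}, \frac{6\delta}{\pi^2 \ell^2})$ to nodes at level $\ell$. The convergent series $\sum_\ell 1/\ell^2$ bounds the budget spent along any root-to-leaf path regardless of eventual tree depth. Combining sequential composition across levels with adaptive parallel composition (Theorem~\ref{thm:parallel}) within each level yields overall $(\varepsilon,\delta)$-DP for the tree, where the adaptive version is essential because the tree is constructed online from the private output of the partitioning mechanism, so the level-wise partitioning of timestamps is not known in advance.

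Third, for utility at query time $t = t_j$, the online interval tree decomposes $D_t$ into a disjoint union of at most $\log m_t$ subsets $D_t(v_{i_1}), \dots, D_t(v_{i_k})$ drawn from the visited nodes on the root-to-leaf path. I would apply Lemma~\ref{lemma:node} to each visited node with its allocated budget $(\varepsilon(v), \delta(v))$, using $j \le m_t$, and take a union bound over the $\log m_t$ visited nodes by dividing $\beta$ accordingly. Under the $\alpha^{(k)}$ error function the per-node bounds compose with $k = O(\log^2 m_t)$ pieces: an outer $\log m_t$ from the tree decomposition times the inner $\log j$ that already appears in Lemma~\ref{lemma:node} via its use of Theorem~\ref{thm:our-ins}. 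Substituting $\varepsilon(v) = \Theta(\varepsilon/\log^2 m_t)$ into the effective budget of Lemma~\ref{lemma:node}, which itself loses a factor $\log^2 N_t \cdot \log j$, yields the overall privacy parameter $\Omega(\varepsilon/\log^5 m_t)$, matching the theorem; the additive $\frac{1}{\varepsilon}\log^{1.5} m_t \log\frac{\log m_t}{\beta}$ in the data-size slot is inherited directly from the per-node additive term in Lemma~\ref{lemma:node}. Finally, conditioning on the high-probability event $m_t = O(N_t + \log\log t)$ from the partitioning mechanism and adding back the additive segmentation error produces the stated bound.

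The main obstacle is the interplay between the infinite, data-dependent tree structure and a uniform privacy accounting: because both the number of segments and the tree depth depend on private quantities, naive budget splitting across levels cannot be fixed in advance, and the $1/\ell^2$ allocation is what buys us a bound that holds for any realized $m_t$ at the cost of a polylogarithmic slack. A secondary subtlety is that adaptive parallel composition must be invoked twice, once inside the partitioning mechanism and once at each level of the tree, to justify that same-level nodes compose in parallel despite online construction; everything else is bookkeeping that chases the error function through the composition.
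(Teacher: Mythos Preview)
Your proposal is correct and follows essentially the same route as the paper: private partitioning, the online interval tree with a $1/\ell^2$ budget split across levels, Lemma~\ref{lemma:node} at each node, and then the bookkeeping that turns $\varepsilon(v)=\Theta(\varepsilon/\log^2 m_t)$ together with the $\log^2 N_t\cdot\log j$ loss inside the lemma into the $\varepsilon/\log^5 m_t$ appearing in the statement. You are in fact more explicit than the paper about why adaptive parallel composition (Theorem~\ref{thm:parallel}) is the right tool at each tree level---the paper simply asserts that the composed mechanism is $(\varepsilon,\delta)$-DP without naming the composition theorem---so your plan fills in a detail the paper leaves implicit rather than departing from it.
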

When instantiated with the PMW mechanism where $\alpha_{\operatorname{PMW}}^{(k)}=\tilde{O}(k\cdot \alpha_{\operatorname{PMW}})$,  the above bound simplifies to  $\tilde{O}(n_t^{1/2})$ (for $\delta>0$) or $\tilde{O}(n_t^{2/3})$ (for $\delta=0$), matching the optimal error  bound in the static setting up to polylogarithmic factors.

%Bibliography
\bibliographystyle{unsrt}  
\bibliography{paper}  

\appendix
\section{DP Mechanisms for Linear Queries: A Review}
\label{app:review}

In this section, we present some important DP mechanisms for linear queries and their error bounds below.
We first analyze the error of a single mechanism, which is similar to the analysis in~\cite{DBLP:books/sp/17/Vadhan17}, where we clarify the dependency on $\beta$.
We then discuss the error bounds for a disjoint union of such mechanisms.

\subsection{Error Bounds}

\medskip
\noindent\textbf{Laplace Mechanism.}
When $\mathcal{Q}=\{f\}$ is a single query, the Laplace mechanism $\mathcal{M}_{\operatorname{Lap}}(D)=f(D)+\operatorname{Lap}(\frac{1}{\varepsilon})$ has error $\alpha_{\operatorname{Lap}}(\varepsilon,\beta)=\frac{1}{\varepsilon}\ln{\frac{1}{\beta}}$.
When $\mathcal{Q}$ contains multiple queries, we may add $\operatorname{Lap}(\frac{\abs{\mathcal{Q}}}{\varepsilon})$ noise to each query result and apply basic composition to guarantee $\varepsilon$-DP of the whole mechanism.
To translate it into an error bound, we bound the failure probability of each noise by $\frac{\beta}{\abs{\mathcal{Q}}}$, so that a union bound will bring the total failure probability to $\beta$.
A similar argument can be made using advanced composition.
To conclude, answering a set of queries $\mathcal{Q}$ using the Laplace mechanism achieves error (for $\delta\geq 0$)
\[
\alpha_{\operatorname{Lap}}(\varepsilon, \delta, \abs{\mathcal{Q}}, \beta)=\left\{\begin{aligned}
&O\left(\frac{\abs{\mathcal{Q}}}{\varepsilon}\log\frac{\abs{\mathcal{Q}}}{\beta}\right) & ,\,\delta\leq e^{-\Omega(\abs{\mathcal{Q}})}\,;\\
&O\left(\frac{\sqrt{\abs{\mathcal{Q}}\log\frac{1}{\delta}}}{\varepsilon}\log\frac{\abs{\mathcal{Q}}}{\beta}\right) & ,\,\delta\geq e^{-O(\abs{\mathcal{Q}})}\,.\\
\end{aligned}\right.
\]

\medskip
\noindent\textbf{Gaussian Mechanism.}
Similar to the Laplace mechanism, the Gaussian mechanism protects $(\varepsilon,\delta)$-DP of query $f$ by outputting $\mathcal{M}_{\operatorname{Gauss}}(D)=f(D)+\mathcal{N}\left(0, \frac{2}{\varepsilon^2}\ln\frac{1.25}{\delta}\right)$, and $\alpha_{\operatorname{Gauss}}(\varepsilon,\delta,\beta)=\frac{2}{\varepsilon}\sqrt{\ln\frac{1.25}{\delta}\ln\frac{2}{\beta}}$.
When composing multiple Gaussian mechanisms that each answers a query from $\mathcal{Q}$, zCDP composition~\cite{DBLP:conf/tcc/BunS16} can be applied, which shows adding $\mathcal{N}\left(0,O\left(\frac{\abs{\mathcal{Q}}}{\varepsilon^2}\log\frac{1}{\delta}\right)\right)$ noise to each query suffices to protect $(\varepsilon,\delta)$-DP of the whole mechanism.
Therefore the Gaussian mechanism achieves the following error (for $\delta > 0$) answering a set of queries $\mathcal{Q}$.
\[
\alpha_{\operatorname{Gauss}}(\varepsilon, \delta, \abs{\mathcal{Q}}, \beta)
=O\left(\frac{\sqrt{\abs{\mathcal{Q}}\log{\frac{1}{\delta}}}}{\varepsilon}\sqrt{\log\frac{\abs{\mathcal{Q}}}{\beta}}\right)\,.
\]

\medskip
\noindent\textbf{Private Multiplicative Weights.} 
When there are many queries $\abs{\mathcal{Q}}=\Omega(\abs{D})$, composing individual mechanisms has error polynomial in $\abs{\mathcal{Q}}$, thus also in $\abs{D}$.
The Private Multiplicative Weights mechanism~\cite{DBLP:conf/focs/HardtR10,DBLP:conf/nips/HardtLM12} performs better in this case. The following error bound is presented in \cite{DBLP:conf/nips/HardtLM12,DBLP:journals/fttcs/DworkR14}.
\[
\alpha_{\operatorname{PMW}}(\varepsilon,\delta,\beta,\abs{D},\abs{\mathcal{Q}},\abs{\mathcal{X}})=\left\{
\begin{aligned}
&O\left(\abs{D}^\frac{2}{3}\left(\frac{\log\abs{\mathcal{X}} \log \frac{\abs{\mathcal{Q}}}{\beta}}{\varepsilon}\right)^\frac{1}{3}\right), & \delta = 0; \\
&O\left(\abs{D}^\frac{1}{2}\left(\frac{\sqrt{\log \abs{\mathcal{X}}\log\frac{1}{\delta}} \log \frac{\abs{\mathcal{Q}}}{\beta}}{\varepsilon}\right)^\frac{1}{2}\right), & \delta > 0.
\end{aligned}
\right.
\]

\medskip
Apart from  mechanisms mentioned above, there are other private mechanisms for linear queries.
For example, the optimal composition \cite{DBLP:journals/tit/KairouzOV17} can be used in place of basic or advanced composition to provide a better allocation of privacy budget, yet computing it is costly.
The $\log \abs{\mathcal{Q}}$ factor is removable for Laplace mechanism~\cite{DBLP:journals/jpc/SteinkeU16} and almost removable for Gaussian mechanism~\cite{DBLP:conf/forc/GaneshZ21}.
Under pure-DP, SmallDB~\cite{DBLP:journals/jacm/BlumLR13} has asymptotically the same error as PMW, but its running time is prohibitive. The Matrix mechanism~\cite{DBLP:conf/pods/LiHRMM10,DBLP:journals/vldb/LiMHMR15} exploits structural properties within the query set $\mathcal{Q}$ and works well in practice. But it does not have a closed-form error bound for general queries.

In general, the best mechanism is related to the hereditary discrepancy~\cite{DBLP:conf/stoc/HardtT10,DBLP:conf/stoc/MuthukrishnanN12} of the set of queries.
For example, for $d$-dimensional halfspace counting queries, \cite{DBLP:conf/stoc/MuthukrishnanN12} has error $O(n^{\frac{1}{2} - \frac{1}{2d}}/\varepsilon)$ with high probability.
In this paper we use $\alpha$ as a function of $\varepsilon,\delta,\beta$, and possibly $\abs{D},\abs{\mathcal{X}}, \abs{\mathcal{Q}}$ to denote the error of any mechanism answering linear queries on \textit{static} datasets, without detailing the best mechanism under a specific setting and choice of the parameters. Since our paper takes a black-box approach, all these algorithms can be plugged into our framework so as to support dynamic data, while incurring a polylogarithmic-factor degradation. 

\subsection{Error Bounds under Disjoint Union}

In this paper we use $\alpha^{(k)}(\varepsilon,\delta,\beta, \abs{D},\dots)$ to denote the error (with probability $1-\beta$) of the sum of $k$ mechanisms on disjoint datasets. 
The $\varepsilon$ and $\delta$ here requires each individual mechanism to be $(\varepsilon,\delta)$-DP.
The privacy requirement of the whole mechanism is analyzed separately, and fed as input to the $\alpha^{(k)}$ function. The $\abs{D}$ here denotes the total size $\sum_{i=1}^k \abs{D_i}$ of the $k$  disjoint datasets.

It always holds that $\alpha^{(k)}(\varepsilon,\delta,\beta)\leq k\cdot \alpha(\varepsilon,\delta,\frac{\beta}{k})$ by taking a union bound over the guarantees $k$ individual mechanisms.
In this section, we show cases where $\alpha^{(k)}$ can be tighter for specific mechanisms.
We will use $\alpha_{\operatorname{Lap}}(\varepsilon,\beta)=\frac{1}{\varepsilon}\ln\frac{1}{\beta}$ as our running example.
We immediately have $\alpha_{\operatorname{Lap}}^{(k)}(\varepsilon,\beta)\leq \frac{k}{\varepsilon}\ln\frac{k}{\beta}$ by the union bound reduction.

\medskip
\noindent\textbf{Unbiasedness.}
If a mechanism $\mathcal{M}$ is \textit{unbiased} with error $\alpha(\varepsilon,\delta,\beta)$, naturally the error only grows by $\sqrt{k}$.
We can argue that with all but $\frac{\beta}{2}$ probability, each of the $k$ mechanisms simultaneously has its error bounded by $\alpha(\varepsilon,\delta,\frac{\beta}{2k})$. Conditioned on this happening, apply Hoeffding's inequality with the remaining $\frac{\beta}{2}$ probability, we get
\[
\alpha_{\operatorname{Unbiased}}^{(k)}(\varepsilon,\delta,\beta)\leq \sqrt{2k\ln\frac{4}{\beta}}\cdot \alpha(\varepsilon,\delta,\frac{\beta}{2k})\,.
\]
The unbiaseness saves a $\sqrt{k}$ dependency on $k$.
For the Laplace mechanism, this means
\[
\alpha_{\operatorname{Lap}}^{(k)}(\varepsilon,\beta)=
O\left(\frac{\sqrt{k}}{\varepsilon}\sqrt{\log\frac{1}{\beta}} \log\frac{k}{\beta}\right) \,.
\]

\medskip
\noindent\textbf{Concentration Bounds.}
For specific distributions like the Laplace (sub-exponential) and the Gaussian (sub-gaussian), concentration bounds are tighter than bounds derived by their unbiasedness.
In general it saves the $\log k$ factor from applying union bound.
For the Laplace mechanism, note that the $\operatorname{Lap}(\frac{1}{\varepsilon})$ random variable is sub-exponential with norm $\|\mathrm{Lap}(\frac{1}{\varepsilon})\|_{\Psi_1}=\frac{2}{\varepsilon}$.
We can then apply Bernstein’s inequality~\cite{book:HDP}.
\begin{lemma}[Bernstein's inequality]
Let $X_1,\dots,X_k$ be i.i.d. zero-mean sub-exponential random variables with norm $\Psi_1$. There is an absolute constant $c$ so that for any $t\geq 0$,
\[
\Pr\left[\left|\sum_{i=1}^k X_i\right| > t\right]\leq 2\exp\left[-c\min\left\{\frac{t^2}{k \Psi_1^2},\frac{t}{\Psi_1}\right\}\right]
\]
\end{lemma}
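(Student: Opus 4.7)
The plan is to use the standard Cram\'er--Chernoff approach. The key preliminary fact is an MGF bound: for any zero-mean random variable $X$ with $\|X\|_{\Psi_1}\le\Psi_1$, there are absolute constants $c_1,c_2>0$ such that
\[
E\!\left[e^{\lambda X}\right]\le \exp\!\left(c_1\lambda^2\Psi_1^2\right)\qquad\text{for all }|\lambda|\le c_2/\Psi_1.
\]
I would first establish this lemma from the Orlicz-norm definition $E[\exp(|X|/\Psi_1)]\le 2$. Expanding the exponential series gives $E[|X|^p]\le C^p\, p!\,\Psi_1^p$ for an absolute constant $C$; substituting into the Taylor expansion of $e^{\lambda X}$, noting that the linear term vanishes since $E[X]=0$, and summing the resulting geometric-type series yields the bound for small enough $|\lambda|\Psi_1$.

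Next, apply Markov's inequality to the exponential of the sum. By independence,
\[
\Pr\!\left[\sum_{i=1}^k X_i>t\right]\le e^{-\lambda t}\prod_{i=1}^k E[e^{\lambda X_i}]\le \exp\!\left(-\lambda t+c_1 k\lambda^2\Psi_1^2\right),
\]
valid for every $\lambda\in(0,c_2/\Psi_1]$. Now optimize $\lambda$ over this admissible interval. The unconstrained minimizer of $-\lambda t+c_1 k\lambda^2\Psi_1^2$ is $\lambda^*=t/(2c_1 k\Psi_1^2)$, so I split into two regimes. If $\lambda^*\le c_2/\Psi_1$, equivalently $t\le 2c_1 c_2 k\Psi_1$, plugging $\lambda^*$ in gives the sub-Gaussian tail $\exp(-t^2/(4c_1 k\Psi_1^2))$. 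Otherwise, choose the boundary value $\lambda=c_2/\Psi_1$; the inequality $t>2c_1 c_2 k\Psi_1$ lets me absorb the $c_1 c_2^2 k$ term into half of the linear $c_2 t/\Psi_1$ term, leaving the sub-exponential tail $\exp(-c_2 t/(2\Psi_1))$.

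The two regimes combine to give a single absolute constant $c>0$ with
\[
\Pr\!\left[\sum_{i=1}^k X_i>t\right]\le \exp\!\left(-c\min\!\left\{\frac{t^2}{k\Psi_1^2},\frac{t}{\Psi_1}\right\}\right).
\]
Applying the identical argument to the variables $-X_i$ (whose sub-exponential norm equals $\Psi_1$) and taking a union bound over the two one-sided events introduces the factor of $2$ and yields the stated two-sided bound.

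The only real obstacle is the MGF lemma of the first paragraph; the rest is a bookkeeping exercise in case analysis. Even that step is routine if one is willing to quote the equivalence of sub-exponential characterizations from the reference \cite{book:HDP}, in which case the proof reduces to the three lines of Chernoff optimization above.
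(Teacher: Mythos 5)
Your proof is correct and is essentially the standard Cram\'er--Chernoff argument for sub-exponential Bernstein found in the reference the paper cites (\cite{book:HDP}, Theorem 2.8.1): MGF bound from the Orlicz-norm moment growth, Markov on $e^{\lambda\sum X_i}$, and a two-regime optimization over $\lambda$ bounded by $c_2/\Psi_1$. The paper itself offers no proof of this lemma, simply quoting it as a textbook fact, so your proposal reproduces the canonical derivation rather than diverging from anything in the paper.
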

This gives a tighter error function for the Laplace mechanism
\[
\alpha_{\operatorname{Lap}}^{(k)}(\varepsilon,\beta)=O\left(\frac{\sqrt{k\log\frac{1}{\beta}}+\log\frac{1}{\beta}}{\varepsilon}\right)\,.
\]

To give another example, the sum of $k$ Gaussian noises is still a Gaussian noise with the variance scaled up by $k$, thus the disjoint union of $k$ Gaussian mechanisms has error function
\[
\alpha_{\operatorname{Gauss}}^{(k)}(\varepsilon,\delta,\beta)=O\left(\frac{\sqrt{k\log\frac{1}{\delta}\log\frac{1}{\beta}}}{\varepsilon}\right)\,.
\]

\medskip

\begin{algorithm}[ht]
\caption{$(\varepsilon,\delta)$-DP Algorithm at node $v=v_i$}\label{algo:node}
\KwIn{Fully-dynamic update stream $((x_1,c_1),\dots,(x_t,c_t),\dots)$, timestamps $(t_1,\dots, t_j,\dots)$, online interval tree node $v=v_i$, probability $\beta$,  privacy budget $(\varepsilon,\delta)$}
\KwIn{Static mechanism $\mathcal{M}_\mathcal{Q}$ and insertion-only mechanism $\mathcal{M}_{\mathcal{Q}-\operatorname{Ins}}$ for answering queries $\mathcal{Q}$, continual counting mechanism $\mathcal{M}_{\operatorname{Lap-Ins}}$}
\KwOut{$\mathcal{Q}(D_t(v))$ at any time $t=t_j$}
\tcc{Initialize}
$r\gets 1$, $(\varepsilon_r, \delta_r)\gets \left(\frac{3\varepsilon}{2\pi^2r^2},\frac{2\delta}{\pi^2r^2}\right), \beta_r\gets\frac{1}{\pi^2}\beta$\;
$D(v)\gets$ All items in $D_{t_i}$ inserted after the closest left-ancestor of $v$ in the online interval tree\;
$\tilde{n}\gets \abs{D(v)} + \operatorname{Lap}(\frac{1}{\varepsilon_r})$\;
Release $\mathcal{M}_{\mathcal{Q}}(D(v))$ under $(\varepsilon_r, \delta_r)$-DP as \textbf{output} to answer $\mathcal{Q}(D(v))$\;
Initiate $\mathcal{M}_{\mathcal{Q}-\operatorname{Ins}}$ and $\mathcal{M}_{\operatorname{Lap-Ins}}$, each under $(\varepsilon_r,\delta_r)$-DP\;
\tcc{Handle deletions}
\ForEach{$j\gets {i+1}, {i+2},\dots$} {
    \ForEach(\tcc*[f]{Update D(v)}){update $(x, c)$ in segment $s_j$}{
    \eIf{$c=-1$ and $x\in D(v)$}{
        Augment the deletion time of $x \in D(v)$ to $(x, t_j)$\;
        Feed an update $x$ to $\mathcal{M}_{\mathcal{Q}-\operatorname{Ins}}$ and $\mathcal{M}_{\operatorname{Lap-Ins}}$\;
    }{
        Feed an update $\bot$ to $\mathcal{M}_{\mathcal{Q}-\operatorname{Ins}}$ and $\mathcal{M}_{\operatorname{Lap-Ins}}$\;
    }
    }
$\tilde{n}_t^{\operatorname{del}}\gets $ The number of deleted items obtained from $\mathcal{M}_{\operatorname{Lap-Ins}}$\;
$\gamma_{t,r} \gets$ The (public) error bound of $\mathcal{M}_{\operatorname{Lap-Ins}}$ at time $t$ with probability $1-\beta_r$\;
\eIf(\tcc*[f]{Restart}){$\tilde{n}_t^{\operatorname{del}} > \tilde{n}/2 + 2\gamma_{t,r}$} {
    $r\gets r+1$, $(\varepsilon_r, \delta_r)\gets \left(\frac{3\varepsilon}{2\pi^2r^2},\frac{2\delta}{\pi^2r^2}\right),\beta_r\gets\frac{1}{\pi^2r^2}\beta$\; 
    $D(v)\gets D(v) -$ all augmented items in $D(v)$\;
    $\tilde{n}\gets \abs{D(v)}+\operatorname{Lap}(\frac{1}{\varepsilon_r})$\;
    \If(\tcc*[f]{Terminate}){$\tilde{n} < 2\gamma_{t,r}$}{
        Halt by answering 0 for all future $\mathcal{Q}(D_t(v))$\;
    } 
    Release a new $\mathcal{M}_{\mathcal{Q}}(D(v))$ under $(\varepsilon_r, \delta_r)$-DP as \textbf{output} to answer $\mathcal{Q}(D_{t_j}(v))$ \;
    Restart both $\mathcal{M}_{\mathcal{Q}-\operatorname{Ins}}$ and $\mathcal{M}_{\operatorname{Lap-Ins}}$, each under $(\varepsilon_r,\delta_r)$-DP\;
}{\label{line:few}
    Obtain the query result for the current round $\mathcal{Q}(D(v))$ from $\mathcal{M}_{\mathcal{Q}}$\;
    Obtain the query result for deleted items $\mathcal{Q}(D_t^{\operatorname{del}}(v))$ from $\mathcal{M}_{\mathcal{Q}-\operatorname{Ins}}$\;
    \textbf{Output} $\mathcal{Q}(D_{t}(v))$ as $\mathcal{Q}(D_{t}(v))\gets \mathcal{Q}(D(v))-\mathcal{Q}(D_{t}^{\operatorname{del}}(v))$\;\label{line:output}
}
}
\end{algorithm}

\end{document}